\journal{Algorithmica}
\newtheorem{theorem}{Theorem} [section]
\newtheorem{corollary}[theorem]{Corollary}
\newtheorem{lemma}[theorem]{Lemma}
\newenvironment{proof}[1][Proof]{\noindent\textbf{#1.} }{\ \rule{0.5em}{0.5em}}
\begin{document}

\begin{frontmatter}



\title{Generating subgraphs in chordal graphs}

\author[label1]{Vadim E. Levit\corref{cor1}}
\address[label1]{Department of Computer Science, Ariel University, ISRAEL}
\cortext[cor1]{Corresponding Author \\ E-mail addresses: levitv@ariel.ac.il (V. E. Levit), davidt@sce.ac.il (D. Tankus).}

\author[label2]{David Tankus}
\address[label2]{Department of Software Engineering, Sami Shamoon College of Engineering, ISRAEL}


\begin{abstract}
A graph $G$ is \textit{well-covered} if all its maximal independent sets are
of the same cardinality. Assume that a weight function $w$ is defined on its
vertices. Then $G$ is $w$\textit{-well-covered} if all maximal independent
sets are of the same weight. For every graph $G$, the set of weight functions
$w$ such that $G$ is $w$-well-covered is a \textit{vector space}, denoted
$WCW(G)$.

Let $B$ be a complete bipartite induced subgraph of $G$ on vertex sets of
bipartition $B_{X}$ and $B_{Y}$. Then $B$ is \textit{generating} if there
exists an independent set $S$ such that $S \cup B_{X}$ and $S \cup B_{Y}$ are
both maximal independent sets of $G$. In the restricted case that a generating
subgraph $B$ is isomorphic to $K_{1,1}$, the unique edge in $B$ is called a
\textit{relating edge}. Generating subgraphs play an important role in finding
$WCW(G)$.

Deciding whether an input graph $G$ is well-covered is \textbf{co-NP}%
-complete. Hence, finding $WCW(G)$ is \textbf{co-NP}-hard. Deciding whether an
edge is relating is \textbf{NP}-complete. Therefore, deciding whether a
subgraph is generating is \textbf{NP}-complete as well.

A graph is chordal if every induced cycle is a triangle. It is known that
finding $WCW(G)$ can be done polynomially in the restricted case that $G$ is
chordal. Thus recognizing well-covered chordal graphs is a polynomial problem.
We present a polynomial algorithm for recognizing relating edges and
generating subgraphs in chordal graphs.


\end{abstract}

\begin{keyword}
weighted well-covered graph \sep maximal independent set \sep relating edge \sep generating subgraph \sep chordal graphs.


\end{keyword}

\end{frontmatter}





\section{Introduction}

\subsection{Basic definitions and notation}

Throughout this paper $G$ is a simple (i.e., a finite, undirected, loopless
and without multiple edges) graph with vertex set $V(G)$ and edge set $E(G)$.
Cycles of $k$ vertices are denoted by $C_{k}$. When we say that $G$ does not
contain $C_{k}$ for some $k \geq3$, we mean that $G$ does not admit subgraphs
isomorphic to $C_{k}$. Note that these subgraphs are not necessarily induced.

Let $u$ and $v$ be two vertices in $G$. The \textit{distance} between $u$ and
$v$, denoted $d(u,v)$, is the length of a shortest path between $u$ and $v$,
where the length of a path is the number of its edges. If $S$ is a non-empty
set of vertices, then the \textit{distance} between $u$ and $S$, is defined as
$d(u,S)=\min\{d(u,s):s\in S\}$.

For every positive integer $i$, denote
\[
N_{i}(S)=\{x\in V\left(  G\right)  :d(x,S)=i\},
\]
and
\[
N_{i}\left[  S\right]  =\{x\in V\left(  G\right)  :d(x,S)\leq i\}.
\]

If $S$ contains a single vertex, $v$, then we abbreviate $N_{i}(\{v\})$,
$N_{i}\left[  \{v\}\right]  $ to be $N_{i}(v)$, $N_{i}\left[  v\right]  $,
respectively. We denote by $G[S]$ the subgraph of $G$ induced by $S$. For
every two sets, $S$ and $T$, of vertices of $G$, we say that $S$
\textit{dominates} $T$ if $T\subseteq N_{1}\left[  S\right]  $.

\subsection{Well-covered graphs}

Let $G$ be a graph. A set of vertices $S$ is \textit{independent} if its
elements are pairwise nonadjacent. An independent set of vertices is
\textit{maximal} if it is not a subset of another independent set. An
independent set of vertices is \textit{maximum} if the graph does not contain
an independent set of a higher cardinality.

The graph $G$ is \textit{well-covered} if every maximal independent set is
maximum \cite{plummer:definition}. Assume that a weight function $w:V\left(
G\right)  \longrightarrow\mathbb{R}$ is defined on the vertices of $G$. For
every set $S\subseteq V\left(  G\right)  $, define $w(S)={\sum\limits_{s\in
S}}w(s)$. Then $G$ is $w$\textit{-well-covered} if all maximal independent
sets of $G$ are of the same weight.

The problem of finding a maximum independent set is \textbf{NP-}complete.
However, if the input is restricted to well-covered graphs, then a maximum
independent set can be found in polynomial time using the \textit{greedy
algorithm}. Similarly, if a weight function $w:V\left(  G\right)
\longrightarrow\mathbb{R}$ is defined on the vertices of $G$, and $G$ is
$w$-well-covered, then finding a maximum weight independent set is a
polynomial problem. There is an interesting application, where well-covered
graphs are investigated in the context of distributed $k$-mutual exclusion
algorithms \cite{yaka:coteries}.

The recognition of well-covered graphs is known to be \textbf{co-NP}-complete.
This is proved independently in \cite{cs:note} and \cite{sknryn:compwc}. In
\cite{cst:structures} it is proven that the problem remains \textbf{co-NP}%
-complete even when the input is restricted to $K_{1,4}$-free graphs. However,
the problem can be solved in polynomial time for $K_{1,3}$-free graphs
\cite{tata:wck13f,tata:wck13fn}, for graphs with girth $5$ at least
\cite{fhn:wcg5}, for graphs with a bounded maximal degree \cite{cer:degree},
for chordal graphs \cite{ptv:chordal}, and for graphs without cycles of
lengths $4$ and $5$ \cite{fhn:wc45}.

For every graph $G$, the set of weight functions $w$ for which $G$ is
$w$-well-covered is a vector space \cite{cer:degree}. That vector space is
denoted $WCW(G)$ \cite{bnz:wcc4}. Since recognizing well-covered graphs is
\textbf{co-NP}-complete, finding the vector space $WCW(G)$ of an input graph
$G$ is \textbf{co-NP}-hard. However, finding $WCW(G)$ can be done in
polynomial time when the input is restricted to graphs with a bounded maximal
degree \cite{cer:degree}, to graphs without cycles of lengths $4$, $5$ and $6$
\cite{lt:wwc456}, and to chordal graphs \cite{bn:wcchordal}.

\subsection{Generating subgraphs and relating edges}

Further we make use of the following notions, which have been introduced in
\cite{lt:wc4567}. Let $B$ be an induced complete bipartite subgraph of $G$ on
vertex sets of bipartition $B_{X}$ and $B_{Y}$. Assume that there exists an
independent set $S$ such that each of $S\cup B_{X}$ and $S\cup B_{Y}$ is a
maximal independent set of $G$. Then $B$ is a \textit{generating subgraph} of
$G$, and the set $S$ is a \textit{witness} that $B$ is generating. We observe
that every weight function $w$ such that $G$ is $w$-well-covered must
\textit{satisfy} the restriction $w(B_{X})=w(B_{Y})$.

If the generating subgraph $B$ contains only one edge, say $xy$, it is called
a \textit{relating edge }\cite{bnz:wcc4}. In such a case, the equality
$w(x)=w(y)$ is valid for every weight function $w$ such that $G$ is $w$-well-covered.

Recognizing relating edges is known to be \textbf{NP-}complete \cite{bnz:wcc4}%
, and it remains \textbf{NP-}complete even when the input is restricted to
graphs without cycles of lengths $4$ and $5$ \cite{lt:relatedc4}. Therefore,
recognizing generating subgraphs is also \textbf{NP-}complete when the input
is restricted to graphs without cycles of lengths $4$ and $5$. However,
recognizing relating edges can be done in polynomial time if the input is
restricted to graphs without cycles of lengths $4$ and $6$ \cite{lt:relatedc4}%
, and to graphs without cycles of lengths $5$ and $6$ \cite{lt:wwc456}.

It is also known that recognizing generating subgraphs is a polynomial problem
when the input is restricted to graphs without cycles of lengths $4$, $6$ and
$7$ \cite{lt:wc4567}, to graphs without cycles of lengths $4$, $5$ and $6$
\cite{lt:wwc456}, and to graphs without cycles of lengths $5$, $6$ and $7$
\cite{lt:wwc456}.

\subsection{Chordal graphs}

A graph is \textit{chordal (triangulated)} if its every induced cycle is a
triangle \cite{Berge1967}. Finding a maximum weight independent set in a
chordal graph is a polynomial task \cite{frank:chmwis}. Deciding whether a
chordal graph is well-covered can be done polynomially \cite{ptv:chordal}.
Finding $WCW(G)$ can be completed polynomially if $G$ is chordal
\cite{bn:wcchordal}. We present a polynomial time algorithm, which receives as
input a chordal graph $G$ and an induced complete bipartite subgraph $B$. The
algorithm decides whether $B$ is generating.

\section{Polynomial results for chordal graphs}

\subsection{The vector space $WCW(G)$}

A vertex $x$ in a graph $G$ is \textit{simplicial} if $N_{1}[x]$ is a clique.
A \textit{simplicial clique} is a maximal clique containing a simplicial vertex.

\begin{theorem}
\cite{ptv:chordal} \label{wcchord} Let $G$ be a chordal graph. Then $G$ is
well-covered if and only if every vertex of $G$ belongs to exactly one
simplicial clique.
\end{theorem}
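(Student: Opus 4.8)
I would dispose of the ``if'' direction first, since it is short. Suppose every vertex of $G$ lies in exactly one simplicial clique. Then the simplicial cliques $C_{1},\dots,C_{k}$ are pairwise disjoint and cover $V(G)$, and each $C_{i}$ equals $N_{1}[x_{i}]$ for some simplicial vertex $x_{i}$, since for a simplicial vertex $x$ the set $N_{1}[x]$ is a clique and is the unique maximal clique containing $x$. For any maximal independent set $S$ and each index $i$ we have $|S\cap C_{i}|\le 1$ because $C_{i}$ is a clique, and $|S\cap C_{i}|\ge 1$ because otherwise $S\cup\{x_{i}\}$ is independent, contradicting maximality. Hence $|S|=\sum_{i}|S\cap C_{i}|=k$, so all maximal independent sets have the same cardinality and $G$ is well-covered.

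For the converse I would argue the contrapositive, separating the two ways ``exactly one'' can fail. Suppose first that a vertex $v$ lies in two distinct simplicial cliques $C_{1}=N_{1}[x_{1}]$ and $C_{2}=N_{1}[x_{2}]$ with $x_{1},x_{2}$ simplicial. A short argument gives $x_{1}\neq x_{2}$ and $x_{1}\not\sim x_{2}$: if $x_{1}\sim x_{2}$, then each $C_{i}$, being a clique that contains the simplicial vertex generating the other, is contained in the other, forcing $C_{1}=C_{2}$. Let $H=G\setminus(C_{1}\cup C_{2})$. For every maximal independent set $T$ of $H$, the set $T\cup\{x_{1},x_{2}\}$ is a maximal independent set of $G$: it is independent because $N_{1}(x_{i})\subseteq C_{i}$, and it is maximal because every vertex of $(C_{1}\cup C_{2})\setminus\{x_{1},x_{2}\}$ lies in $C_{1}\setminus\{x_{1}\}$ or in $C_{2}\setminus\{x_{2}\}$ and so is dominated by $x_{1}$ or by $x_{2}$, while every vertex of $H$ is dominated by $T$. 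If $G$ were well-covered this would force the independence numbers to satisfy $\alpha(G)=\alpha(H)+2$. On the other hand $C_{1}\cup C_{2}\subseteq N_{1}[v]$, so $G\setminus N_{1}[v]$ is an induced subgraph of $H$ and $\alpha(G\setminus N_{1}[v])\le\alpha(H)$; and for every maximal independent set $T'$ of $G\setminus N_{1}[v]$ the set $\{v\}\cup T'$ is a maximal independent set of $G$, so well-coveredness would also give $\alpha(G)=\alpha(G\setminus N_{1}[v])+1\le\alpha(H)+1$, contradicting $\alpha(G)=\alpha(H)+2$. So $G$ is not well-covered. (This argument uses only the clique structure of $C_{1}$ and $C_{2}$, not chordality.)

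Suppose instead that a vertex $v$ lies in no simplicial clique. Here I would induct on $|V(G)|$. Fix a simplicial vertex $x$ and put $C=N_{1}[x]$; since $T\mapsto T\cup\{x\}$ sends maximal independent sets of $G\setminus C$ to maximal independent sets of $G$, the graph $G\setminus C$ is a smaller well-covered chordal graph, and by the induction hypothesis its simplicial cliques partition $V(G)\setminus C$. Now $v\notin C$, for otherwise $v$ would lie in the simplicial clique $C$; so $v$ lies in a simplicial clique $D=N_{1}^{G\setminus C}[y]$ of $G\setminus C$ for some $y$ simplicial in $G\setminus C$. If $y$ were simplicial in $G$, then $v\in D\subseteq N_{1}^{G}[y]$, a simplicial clique of $G$, contrary to hypothesis; so $y$ is not simplicial in $G$, which means $y$ has a neighbour $a\in C$ non-adjacent to some $b\in D$, and since $a\in N_{1}[x]$ while $b\notin N_{1}[x]$ the vertices $x,a,y,b$ induce a path with edges $xa$, $ay$, $yb$. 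The plan is to run this configuration, together with the vertex $v\in D$ and the known simplicial-clique partition of $G\setminus C$, through chordality (every induced cycle is a triangle), so as to produce either a simplicial vertex of $G$ whose closed neighbourhood contains $v$, or else two maximal independent sets of $G$ of different cardinality. Combined with the disjointness established above, this forces the simplicial cliques of $G$ to partition $V(G)$.

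I expect this last step to be the main obstacle. Deleting a simplicial clique may create new simplicial vertices in $G\setminus C$, so the simplicial cliques of $G\setminus C$ need not be simplicial cliques of $G$; the work is to show, using chordality together with the disjointness statement applied to both $G$ and $G\setminus C$, that the simplicial structure nonetheless ``lifts'' far enough to trap $v$ inside a simplicial clique of $G$. The rest --- the ``if'' direction and the two-simplicial-cliques case --- comes down to the two short independence-number comparisons above.
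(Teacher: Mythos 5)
Your proposal is incomplete, and you say so yourself. Note first that the paper offers no proof of this theorem --- it is quoted from Prisner--Topp--Vestergaard \cite{ptv:chordal} --- so the only question is whether your argument stands on its own. The ``if'' direction is correct and complete: the simplicial cliques partition $V(G)$ into cliques each of which meets every maximal independent set in exactly one vertex. The ``at most one simplicial clique'' half of the converse is also correct and complete: your two independence-number comparisons ($\alpha(G)=\alpha(H)+2$ via $T\cup\{x_1,x_2\}$ versus $\alpha(G)\le\alpha(H)+1$ via $\{v\}\cup T'$) are sound, and it is a nice observation that neither direction so far uses chordality.

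The genuine gap is the ``at least one simplicial clique'' half: showing that in a well-covered chordal graph every vertex lies in \emph{some} simplicial clique. You set up a plausible induction (delete $C=N_1[x]$ for a simplicial $x$, note $G\setminus C$ is again well-covered and chordal, locate $v$ in a simplicial clique $N_1^{G\setminus C}[y]$ of the smaller graph, and observe that $y$ is simplicial in $G\setminus C$ but not in $G$, yielding the induced path $x\!-\!a\!-\!y\!-\!b$). All of that is correct. But then you write ``the plan is to run this configuration \dots through chordality'' and ``I expect this last step to be the main obstacle'' --- and you never carry it out. This is precisely the step where well-coveredness and chordality must interact nontrivially (the configuration $x\!-\!a\!-\!y\!-\!b$ by itself contradicts nothing: new simplicial vertices really do appear after deleting $C$, and one must either produce two maximal independent sets of $G$ of different sizes or exhibit a simplicial vertex of $G$ dominating $v$). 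As it stands the argument establishes disjointness of simplicial cliques and the sufficiency direction, but not the coverage claim, so the theorem is not proved.
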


In \cite{bn:wcchordal}, a polynomial characterization of $WCW(G)$ is
presented, for the case that $G$ is chordal. The following definitions and
notation are used.

Let $C(G)$ be the set of all simplicial cliques and $sc(G)=|C(G)|$. Let $C \in
C(G)$ be a simplicial clique. The \textit{associated weighting} function,
denoted $f_{C} : V(G) \longrightarrow\mathbb{R}$, is defined as follows. If
$v\in C$ then $f_{C}(v)=1$, otherwise $f_{C}(v)=0$.

\begin{lemma}
\label{wcwsubspace} \cite{bn:wcchordal} $f_{C} \in WCW(G)$ for every graph
$G$, and for each simplicial clique $C \in C(G)$. Moreover, $\{f_{C}:C \in
C(G)\}$ is an independent set of vectors, and $wcdim(G) \geq sc(G)$.
\end{lemma}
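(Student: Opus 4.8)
The plan is to show that for each simplicial clique $C\in C(G)$, every maximal independent set of $G$ meets $C$ in exactly one vertex; this forces $f_C(I)=1$ for all maximal independent sets $I$ and hence places $f_C$ in $WCW(G)$. Note that chordality will not be used anywhere in this part, which matches the phrasing ``for every graph $G$''. First I would record the elementary fact that if $v$ is a simplicial vertex then $N_{1}[v]$ is the \emph{unique} maximal clique containing $v$: any clique through $v$ consists of $v$ together with neighbours of $v$, hence is contained in $N_{1}[v]$, and $N_{1}[v]$ is itself a clique by simpliciality. Consequently, a simplicial clique $C$ with simplicial vertex $v$ satisfies $C=N_{1}[v]$.

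Next I would fix a maximal independent set $I$ and a simplicial clique $C=N_{1}[v]$. Since $C$ is a clique, $|I\cap C|\le 1$. For the reverse inequality there are two cases: either $v\in I$, and then $v\in I\cap C$; or $v\notin I$, and then maximality of $I$ forces $v$ to have a neighbour $u\in I$, whence $u\in N_{1}(v)\subseteq C$, so again $I\cap C\ne\emptyset$. Therefore $|I\cap C|=1$, and $f_{C}(I)=\sum_{s\in I}f_{C}(s)=|I\cap C|=1$ for every maximal independent set $I$ of $G$. Thus all maximal independent sets have the same $f_{C}$-weight, i.e. $G$ is $f_{C}$-well-covered, which is exactly $f_{C}\in WCW(G)$.

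For the independence claim, I would choose for each $C\in C(G)$ a simplicial vertex $v_{C}\in C$. By the uniqueness statement above, $N_{1}[v_{C}]$ is the only maximal clique containing $v_{C}$, so $v_{C}$ lies in no simplicial clique other than $C$; hence $f_{C'}(v_{C})=1$ if $C'=C$ and $f_{C'}(v_{C})=0$ otherwise. If $\sum_{C\in C(G)}a_{C}f_{C}$ is the zero weight function, then evaluating it at $v_{C_{0}}$ gives $a_{C_{0}}=0$ for each $C_{0}\in C(G)$, so $\{f_{C}:C\in C(G)\}$ is linearly independent. Since, by the result quoted in the introduction, $WCW(G)$ is a vector space, and it contains these $sc(G)$ linearly independent vectors, its dimension $wcdim(G)$ satisfies $wcdim(G)\ge sc(G)$.

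The argument is short, and the only point requiring care is the uniqueness of the maximal clique through a simplicial vertex: it is what lets me identify $C$ with $N_{1}[v]$ (so that the ``exactly one vertex'' argument even makes sense) and what guarantees that the chosen vertices $v_{C}$ separate the vectors $f_{C}$. Everything else is routine bookkeeping.
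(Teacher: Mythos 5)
Your proof is correct. The paper itself gives no proof of this lemma --- it is quoted from Brown and Nowakowski \cite{bn:wcchordal} without argument --- so there is nothing to compare against, but your argument is the standard one and it is sound: the key observation that a simplicial vertex $v$ lies in the unique maximal clique $N_{1}[v]$ correctly yields both that every maximal independent set meets each simplicial clique in exactly one vertex (so $f_{C}\in WCW(G)$) and that the chosen vertices $v_{C}$ serve as dual evaluations separating the $f_{C}$, giving linear independence and the dimension bound.
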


\begin{theorem}
\label{wcwchordal} \cite{bn:wcchordal} Let $G$ be a chordal graph. Then
$wcdim(G) = sc(G)$.
\end{theorem}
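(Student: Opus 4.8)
The plan is to prove the nontrivial inequality $wcdim(G) \le sc(G)$; combined with $wcdim(G) \ge sc(G)$ from Lemma \ref{wcwsubspace} this gives the equality, and in fact forces $WCW(G) = \mathrm{span}\{f_{C} : C \in C(G)\}$. First I would fix a convenient coordinate system. Every simplicial vertex $v$ lies in exactly one maximal clique of $G$, namely $N_{1}[v]$, which is a simplicial clique; hence distinct simplicial cliques contain disjoint sets of simplicial vertices, and one can choose a transversal $R = \{r_{C} : C \in C(G)\}$ with each $r_{C} \in C$ simplicial and $|R| = sc(G)$. Since $r_{C}$ lies only in the simplicial clique $C$, we get $f_{C'}(r_{C}) = 1$ if $C' = C$ and $0$ otherwise, so the restriction map $\rho : WCW(G) \to \mathbb{R}^{R}$, $w \mapsto w|_{R}$, carries $\mathrm{span}\{f_{C}\}$ isomorphically onto $\mathbb{R}^{R}$. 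Consequently $wcdim(G) \le sc(G)$ will follow once we show $\rho$ is injective, i.e. that the only $w \in WCW(G)$ with $w|_{R} \equiv 0$ is $w \equiv 0$; this reformulation is what I would actually prove.

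The main tool is a deletion lemma. Let $x$ be a simplicial vertex, $C = N_{1}[x]$ its simplicial clique, and $w \in WCW(G)$; denote by $k$ the common weight of the maximal independent sets of $G$. Since any maximal independent set $S$ of $G$ dominates $x$, it contains $x$ or a neighbour of $x$, hence exactly one vertex of the clique $C$; thus $f_{C}$ has constant weight $1$ on maximal independent sets, so $w' := w - w(x) f_{C} \in WCW(G)$ with $w'(x) = 0$ and common weight $k - w(x)$. I claim $w'$ restricted to $V(G-x)$ lies in $WCW(G-x)$: a maximal independent set $S$ of $G-x$ containing a neighbour of $x$ is already maximal in $G$, so $w'(S) = k - w(x)$; a maximal independent set $S$ of $G-x$ containing no neighbour of $x$ has $S \cup \{x\}$ maximal in $G$, so $w'(S) = w'(S \cup \{x\}) - w'(x) = (k - w(x)) - 0$. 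Either way $w'|_{V(G-x)}$ is constant on the maximal independent sets of $G-x$.

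Now I would argue by induction on $|V(G)|$, the base case $|V(G)| = 1$ being immediate. Take $w \in WCW(G)$ with $w|_{R} \equiv 0$ and let $x$ be a suitably chosen simplicial vertex. When $x$ is isolated, $\{x\} \in C(G)$ forces $x \in R$, hence $w(x) = 0$, and one reduces directly to $G-x$ (whose simplicial cliques are exactly those of $G$ other than $\{x\}$, with $R \setminus \{x\}$ a transversal on which $w$ vanishes). Otherwise, choosing $x \in R$ (so $w(x) = 0$ and $w = w'$), the deletion lemma gives $w|_{V(G-x)} \in WCW(G-x)$, and one wants to apply the induction hypothesis to $G-x$ equipped with a transversal $R'$ of its simplicial cliques on which $w$ vanishes, concluding $w|_{V(G-x)} \equiv 0$ and hence $w \equiv 0$. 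The delicate point — and the step I expect to be the real obstacle — is that $C(G-x)$ need not be $C(G) \setminus \{C\}$: deleting a simplicial vertex can promote former non-simplicial neighbours of $x$ to simplicial vertices and thereby create new simplicial cliques (each containing a vertex of $N_{1}(x)$), and the clique $C \setminus \{x\}$ may itself become simplicial. Handling this requires a careful choice of $x$ — for instance a simplicial vertex with $N_{1}[x]$ of minimum size, or an argument that some simplicial vertex can always be deleted without creating new simplicial cliques — together with a verification that $w$ already vanishes on enough of $N_{1}(x)$ to supply the transversal $R'$; once that bookkeeping is in place the induction closes, while everything else is routine linear algebra and the elementary structure of simplicial vertices in chordal graphs.
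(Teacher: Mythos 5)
This theorem is quoted in the paper from \cite{bn:wcchordal} without an in-paper proof, so there is nothing to compare your route against; I can only assess the argument on its own terms. Your reduction of $wcdim(G)\leq sc(G)$ to the injectivity of the restriction map $\rho:WCW(G)\to\mathbb{R}^{R}$ is sound (every simplicial vertex lies in the unique maximal clique $N_{1}[v]$, so the transversal exists and $\rho$ sends $\{f_{C}\}$ to the standard basis), and your deletion lemma is correct as stated: every maximal independent set meets $N_{1}[x]$ in exactly one vertex, so $f_{C}$ has constant weight $1$, and the two-case analysis of maximal independent sets of $G-x$ is complete. Neither of these steps is where the theorem lives, however.

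The genuine gap is exactly the one you flag and then defer: after deleting $x\in R$ you need $w$ to vanish on a transversal of $C(G-x)$, and the new simplicial cliques created inside $N_{1}(x)$ carry no information from the hypothesis $w|_{R}\equiv 0$. Your two suggested repairs both fail already on $P_{4}=(a,b,c,d)$: here $R=\{a,d\}$, both simplicial vertices have $|N_{1}[x]|=2$, and deleting either endpoint promotes its neighbour to a simplicial vertex of a brand-new simplicial clique ($\{b,c\}$ after deleting $a$), so there is no simplicial vertex whose deletion avoids creating new simplicial cliques, and no choice of $x$ for which the required transversal $R'$ is covered by $R\setminus\{x\}$. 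The missing fact --- that $w\in WCW(G)$ together with $w|_{R}\equiv 0$ already forces $w$ to vanish on the relevant neighbours of $x$ (in $P_{4}$, that $w(a)=w(b)$, which follows from comparing the maximal independent sets $\{a,d\}$ and $\{b,d\}$ but not from your deletion lemma or the transversal hypothesis) --- is the actual content of the theorem, and it is precisely what distinguishes chordal graphs from, say, bipartite graphs, where the analogue of this dimension count fails. As written, the induction does not close, so the proof is incomplete at its central step; you would need either a substantially strengthened induction hypothesis that tracks the values of $w$ on $N_{1}(x)$, or a different decomposition (e.g., via a perfect elimination ordering or clique tree, as in the Brown--Nowakowski source) that determines $w$ on non-simplicial vertices directly.
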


By Lemma \ref{wcwsubspace}, for every graph $G$, the vector space spanned by
$\{f_{C}:C\in C(G)\}$ is a subspace of $WCW(G)$. Moreover, if $G$ is chordal
then, by Theorem \ref{wcwchordal}, the vector space spanned by $\{f_{C}:C\in
C(G)\}$ coincides with $WCW(G)$. Let $T$ be the set of all simplicial vertices
in a chordal graph $G$, and let $S$ be a maximal independent set of $G[T]$.
Clearly, $C(G)=\{N_{1}(v):v\in S\}$. In order to construct a function $w\in
WCW(G)$, the following algorithm can be implemented. For every $s\in S$,
define $w(s)$ arbitrarily, while for each vertex $v\in V(G)\setminus S$, let
$w(v)=w(N_{1}(v)\cap S)$. In other words, assigning $1$ to one vertex in $S$
and $0$ to all others, we describe a basis of $WCW(G)$. Clearly, this
procedure is polynomial.

\subsection{Generating subgraphs}

The main result of this subsection is a polynomial time algorithm for
recognizing generating subgraphs in chordal graphs. Let $G$ be a chordal
graph, and let $B$ be an induced complete bipartite subgraph of $G$ on vertex
sets of the bipartition $B_{X}=\{x_{1},...,x_{l}\}$ and $B_{Y}=\{y_{1}%
,...,y_{k}\}$, where $l\leq k$.

\begin{lemma}
\label{k1j} $l=1$.
\end{lemma}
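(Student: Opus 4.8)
The claim is that in a chordal graph, a generating induced complete bipartite subgraph $B$ must have one side of size $1$, i.e. $B \cong K_{1,k}$. So suppose for contradiction that $l \ge 2$, meaning $B_X = \{x_1,\dots,x_l\}$ and $B_Y = \{y_1,\dots,y_k\}$ each contain at least two vertices, and let $S$ be a witness: $S \cup B_X$ and $S \cup B_Y$ are both maximal independent sets. The plan is to locate a short induced cycle inside $B$ together with a couple of witness vertices, contradicting chordality.

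**The key structural observation.** Since $B$ is induced complete bipartite with $l,k \ge 2$, the four vertices $x_1,x_2,y_1,y_2$ induce a $C_4$: $x_1 y_1 x_2 y_2 x_1$, with no chords $x_1x_2$ or $y_1y_2$. This already looks like trouble, but a $C_4$ is allowed to sit inside a chordal graph as long as it is not induced — wait, it *is* induced here by hypothesis that $B$ is an induced subgraph. So actually $l \ge 2$ and $k \ge 2$ is immediately impossible in a chordal graph, because $G[\{x_1,x_2,y_1,y_2\}]$ is an induced $4$-cycle. Hence the real content is: we cannot even have $l \ge 2$ *and* $k \ge 2$. Combined with $l \le k$, this forces $l = 1$. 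So the first thing I would do is observe that $B$ induced and chordal already gives $\min(l,k) = 1$; since $l \le k$, we get $l = 1$. The role of "generating" and the witness $S$ seems to be a red herring for this particular lemma — chordality of $G$ alone suffices.

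**Writing it up.** Concretely: assume $l \ge 2$. Then $k \ge l \ge 2$, so pick $x_1, x_2 \in B_X$ and $y_1, y_2 \in B_Y$. Because $B$ is a complete bipartite subgraph, each $x_i$ is adjacent to each $y_j$; because $B$ is an *induced* subgraph and $B_X, B_Y$ are the two sides of the bipartition, $x_1 x_2 \notin E(G)$ and $y_1 y_2 \notin E(G)$. Therefore $G[\{x_1,y_1,x_2,y_2\}]$ is a cycle on four vertices with no chord, i.e. an induced $C_4$, contradicting the assumption that $G$ is chordal (every induced cycle is a triangle). Hence $l = 1$.

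**Where the difficulty is.** Honestly, there is essentially no obstacle: the only thing to be careful about is the definition of "induced complete bipartite subgraph with given bipartition" — one must confirm that it really forbids edges *within* $B_X$ and within $B_Y$ (as opposed to merely requiring all cross edges present). Given the paper's phrasing "induced complete bipartite subgraph of $G$ on vertex sets of bipartition $B_X$ and $B_Y$", this is the intended reading, so the proof is the two-line $C_4$ argument above. I would present it in exactly that form.
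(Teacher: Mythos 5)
Your proof is correct and is essentially identical to the paper's: both argue that if $l\geq 2$ then $G[\{x_1,x_2,y_1,y_2\}]$ is an induced $K_{2,2}=C_4$, contradicting chordality. Your side remark is also accurate --- the paper likewise uses only chordality and the induced complete bipartite structure, not the generating hypothesis.
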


\begin{proof}
If $l \geq2$ then $G[\{x_{1},x_{2},y_{1},y_{2}\}]$ is isomorphic to $K_{2,2} =
C_{4}$, which contradicts the fact that $G$ is chordal.
\end{proof}

Since $l=1$, we denote $B_{X}=\{x\}$. For each $V\in\{X,Y\}$ let $S \subseteq B_{V}$ and
$U\in\{X,Y\}\setminus\{V\}$. Define $M_{1}(S)=N_{1}(S) \cap N_{2}(B_{U})$ and $M_{2}(S)=N_{1}(M_{1}(S)) \cap N_{2}(B_{V})$. If $S$ contains a single vertex, $v$, abbreviate $M_{1}(\{v\})$ and $M_{2}(\{v\})$ to $M_{1}(v)$ 
and $M_{2}(v)$, respectively. Define $M_{1}(B)=M_{1}(B_{X}) \cup M_{1}(B_{Y})$ and $M_{2}(B) = M_{2}(B_{X}) \cup M_{2}(B_{Y})$.

\begin{lemma}
\label{mbx} $M_{1}(B_{X})$ and $M_{1}(B_{Y})$ are disjoint and nonadjacent.
\end{lemma}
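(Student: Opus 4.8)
The plan is to treat the two assertions separately, reducing each to the defining property of a chordal graph, namely that it contains no induced $C_{4}$; the key leverage is Lemma~\ref{k1j}, which lets us write $B_{X}=\{x\}$ for a single vertex $x$.

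For \emph{disjointness}, I would unwind the definitions. Every $v\in M_{1}(B_{X})$ lies in $N_{1}(B_{X})=N_{1}(x)$, hence is adjacent to $x$; every $v\in M_{1}(B_{Y})$ lies in $N_{2}(B_{X})=N_{2}(x)$, hence satisfies $d(v,x)=2$ and in particular is \emph{not} adjacent to $x$. These two conditions are incompatible, so $M_{1}(B_{X})\cap M_{1}(B_{Y})=\emptyset$.

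For \emph{nonadjacency}, I would argue by contradiction: suppose $a\in M_{1}(B_{X})$, $b\in M_{1}(B_{Y})$, and $ab\in E(G)$. Since $b\in N_{1}(B_{Y})$, pick $y_{i}\in B_{Y}$ with $b\sim y_{i}$. The four vertices $x,a,b,y_{i}$ are pairwise distinct (e.g.\ $a\sim x$ but $b\not\sim x$; $a\in N_{1}(x)$ so $a\neq x$; $a\in N_{2}(B_{Y})$ so $a\neq y_{i}$; and $y_{i}\neq x$ since they lie on opposite sides of the bipartition of $B$). Moreover the four edges $xa$, $ab$, $by_{i}$, $y_{i}x$ are all present ($x\sim a$ because $a\in N_{1}(x)$; $a\sim b$ by assumption; $b\sim y_{i}$ by choice of $y_{i}$; $x\sim y_{i}$ because $B$ is complete bipartite), whereas the two ``diagonals'' are absent: $x\not\sim b$ because $d(b,x)=2$, and $a\not\sim y_{i}$ because $a\in N_{2}(B_{Y})$ means $d(a,B_{Y})=2$, so $a$ has no neighbour in $B_{Y}$. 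Hence $G[\{x,a,b,y_{i}\}]\cong C_{4}$, contradicting the fact that $G$ is chordal, and therefore no edge joins $M_{1}(B_{X})$ to $M_{1}(B_{Y})$.

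I do not expect any genuine obstacle here: the only thing requiring care is the bookkeeping that the four vertices are distinct and that the distance constraints in the definition of $M_{1}$ force exactly the right pattern of present and absent edges; once that is in place, the contradiction with the no-induced-$C_{4}$ property is immediate.
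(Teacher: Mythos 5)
Your proposal is correct and follows essentially the same route as the paper: disjointness is immediate from the distance conditions in the definitions, and nonadjacency is refuted by exhibiting the induced $C_{4}$ on $x,a,b,y_{i}$ (the paper's $x,x',y',y_{i}$), with the same verification that the two diagonals are absent. Your write-up is just slightly more explicit about the distinctness of the four vertices.
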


\begin{proof}
The fact that $M_{1}(B_{X})$ and $M_{1}(B_{Y})$ are disjoint follows
immediately from the definition of $M_{1}(B_{X})$ and $M_{1}(B_{Y})$.

Let $x^{\prime}\in M_{1}(B_{X})$ and $y^{\prime}\in M_{1}(B_{Y})$, and assume
on the contrary that $x^{\prime}\in N(y^{\prime})$. There exists $1\leq i\leq
k$ such that $y^{\prime}\in N_{1}(y_{i})$. Therefore, $C=(x,y_{i},y^{\prime
},x^{\prime})$ is a copy of $C_{4}$. Obviously, $xy^{\prime}\not \in E(G)$ and
$y_{i}x^{\prime}\not \in E(G)$. Hence, $C$ is an induced $C_{4}$, which is a contradiction.
\end{proof}

\begin{lemma}
\label{m2bx} $M_{2}(B_{X})$ and $M_{2}(B_{Y})$ are disjoint and nonadjacent.
\end{lemma}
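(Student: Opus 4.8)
The plan is to mimic the structure of the proof of Lemma~\ref{mbx}, one level deeper in the distance layers, and to derive a contradiction with chordality by exhibiting a short induced cycle (a $C_4$ or $C_5$) whenever an element of $M_2(B_X)$ coincides with or is adjacent to an element of $M_2(B_Y)$. Recall that, by definition, $M_2(B_X) = N_1(M_1(B_X)) \cap N_2(B_X)$ and $M_2(B_Y) = N_1(M_1(B_Y)) \cap N_2(B_Y)$, so every vertex of $M_2(B_X)$ has a neighbour in $M_1(B_X)$ and is at distance exactly $2$ from $B_X$, and symmetrically for $M_2(B_Y)$; also keep in mind $B_X = \{x\}$ by Lemma~\ref{k1j}.

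\textbf{Disjointness.} Suppose $z \in M_2(B_X) \cap M_2(B_Y)$. Then $z \in N_2(B_X)$, so $d(z,x) = 2$; but $z \in N_1(M_1(B_Y))$ with $M_1(B_Y) \subseteq N_1(B_Y)$, which places $z$ within distance $2$ of $B_Y$ and, more importantly, gives a path $z, y', y_i, x$ for some $y' \in M_1(B_Y)$ and some $y_i \in B_Y$. I would examine the chords of this path together with the fact that $y' \in N_2(B_X)$ (so $y'x \notin E(G)$, indeed $d(y',x)=2$): the subpath $x, y_i, y', z$ has no chord $xy'$, and I then split on whether $z$ is adjacent to $x$ or to $y_i$. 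Since $z \in N_2(B_X)$ rules out $zx \in E(G)$, the only possible chord is $zy_i$, and in that case $x, y_i, z$ plus a common neighbour of $z$ in $M_1(B_X)$ produces an induced $C_4$ after checking the remaining non-edges via the layer constraints; the case with no chord at all gives an induced $C_4$ directly on $(x, y_i, y', z)$, contradicting chordality. The bookkeeping here is essentially identical to Lemma~\ref{mbx}, using that $d(\cdot,B_X)$ and $d(\cdot,B_Y)$ are pinned down on each relevant vertex.

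\textbf{Nonadjacency.} Now suppose $x'' \in M_2(B_X)$, $y'' \in M_2(B_Y)$ and $x''y'' \in E(G)$. Pick $x' \in M_1(B_X)$ with $x'x'' \in E(G)$ and $y' \in M_1(B_Y)$ with $y'y'' \in E(G)$; pick also $y_i \in B_Y$ with $y' \in N_1(y_i)$. This yields the cyclic sequence $x, x', x'', y'', y', y_i, x$ of length $6$, and the strategy is to show that after accounting for the forced non-edges it contains an induced $C_4$ or $C_5$. The forced non-edges are: $xx' \in E$, $x x'' \notin E$ (since $d(x'',x)=2$), $x y' \notin E$ and $x y'' \notin E$ (both follow from $M_1(B_X)$ vs.\ $M_1(B_Y)$ disjointness/nonadjacency, Lemma~\ref{mbx}, applied together with $y'' \in N_2(B_X)$), $y_i x' \notin E$ (else $x,y_i,x'$ closes a triangle putting $x'$ too close to $B_Y$, or one reduces as in Lemma~\ref{mbx}), and $x' y'' \notin E$, $x'' y' \notin E$ by the same distance-layer arguments (a vertex of $N_1(M_1(B_X)) \cap N_2(B_X)$ cannot touch $N_1(B_Y)$ without creating a short induced cycle through $x$). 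Once these are in hand, the sextuple reduces: if $x' y' \notin E$ and $x'' y_i \notin E$ then $(x,x',x'',y'',y',y_i)$ is an induced $C_6$, contradiction; each admissible chord instead shortens it to an induced $C_5$ or $C_4$, again a contradiction.

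\textbf{Main obstacle.} The delicate point is establishing the auxiliary non-edges $x'y'' \notin E(G)$ and $x''y' \notin E(G)$ (and $x'y' \notin E(G)$), since these are what make the long cycle in the nonadjacency case collapse to a forbidden induced cycle rather than a harmless triangulated configuration. I expect to prove each of these as a small standalone claim: assuming, say, $x'y' \in E(G)$ forces $(x, x', y', y_i)$ or $(x, x', y', y'')$ into a configuration containing an induced $C_4$, using that $x' \in N_2(B_Y)$ so $x'y_i \notin E$ and $x'x \in E$ while $xy' \notin E$. Carefully tracking which of $d(\cdot, B_X)$ and $d(\cdot, B_Y)$ equals $1$, $2$, or is larger on each of $x', x'', y', y''$ is the whole content of the argument, and doing this systematically — rather than by ad hoc case analysis — is what I would aim for to keep the proof short.
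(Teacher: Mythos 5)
Your nonadjacency argument is essentially the paper's: both form the six‑cycle $x,x',x'',y'',y',y_i,x$ and rule out every chord via the distance layers and Lemma~\ref{mbx} (the paper simply asserts the induced $C_6$; your explicit list of forced non‑edges is correct, and $x'y'\notin E$ is exactly Lemma~\ref{mbx}, so it is not really an ``obstacle'').

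The disjointness part, however, has a genuine gap. You take $z\in M_2(B_X)\cap M_2(B_Y)$ and work with the sequence $x,y_i,y',z$, but this is a \emph{path}, not a cycle: as you yourself observe, $zx\notin E(G)$ because $z\in N_2(B_X)$, and moreover $zy_i\notin E(G)$ because $z\in M_2(B_Y)\subseteq N_2(B_Y)$, so your remaining chord case $zy_i\in E$ is vacuous. Hence ``the case with no chord at all'' does not give an induced $C_4$ on $(x,y_i,y',z)$ — it gives an induced $P_4$, which is perfectly legal in a chordal graph, and no contradiction follows. The missing idea is to close the cycle on the other side: since $z\in M_2(B_X)=N_1(M_1(B_X))\cap N_2(B_X)$, there is a vertex $x'\in M_1(B_X)\cap N_1(z)$, and since $z\in M_2(B_Y)$ there is $y'\in M_1(B_Y)\cap N_1(z)$; by Lemma~\ref{mbx} these are distinct and nonadjacent, and then $(x,x',z,y',y_i)$ is an induced $C_5$ (all non‑edges follow from $x',z,y'$ lying in $N_2$ of the appropriate side of $B$ together with Lemma~\ref{mbx}). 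This is exactly the paper's argument, and without bringing in the neighbour $x'$ of $z$ in $M_1(B_X)$ your disjointness proof cannot be completed.
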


\begin{proof}
Assume on the contrary that there exists a vertex $v\in M_{2}(B_{X})\cap
M_{2}(B_{Y})$. There exist two vertices $x^{\prime}\in M_{1}(B_{X})\cap
N_{1}(v)$ and $y^{\prime}\in M_{1}(B_{Y})\cap N_{1}(v)$. By Lemma \ref{mbx},
$x^{\prime}$ and $y^{\prime}$ are distinct and nonadjacent. There exists $y\in
B_{Y}$ such that $G[\{x,y,y^{\prime},v,x^{\prime}\}]$ is isomorphic to $C_{5}%
$, which is a contradiction.

Assume, on the contrary, that there exist two adjacent vertices,
$x^{\prime\prime}\in M_{2}(B_{X})$ and $y^{\prime\prime}\in M_{2}(B_{Y})$.
There exist $x^{\prime}\in M_{1}(B_{X})\cap N_{1}(x^{\prime\prime})$ and
$y^{\prime}\in M_{1}(B_{Y})\cap N_{1}(y^{\prime\prime})$. By Lemma \ref{mbx},
$x^{\prime}$ and $y^{\prime}$ are distinct and nonadjacent. There exists $y\in
B_{Y}$ such that $G[\{x,y,y^{\prime},y^{\prime\prime},x^{\prime\prime
},x^{\prime}\}]$ is isomorphic to $C_{6}$, which is a contradiction.
\end{proof}

\begin{lemma}
\label{myi} Let $1\leq i<j\leq k$. Then $M_{1}(y_{i})$ and $M_{1}(y_{j})$ are
disjoint and nonadjacent.
\end{lemma}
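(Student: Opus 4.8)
The plan is to follow the template of Lemmas~\ref{mbx} and~\ref{m2bx}: short cycles through $x$ are forbidden because $G$ is chordal. Two standing facts drive everything. First, since $B_{X}=\{x\}$ is a singleton, $M_{1}(y_{i})=N_{1}(y_{i})\cap N_{2}(x)$ for every $1\leq i\leq k$; hence every vertex of $M_{1}(y_{i})$ is adjacent to $y_{i}$, differs from $x$, and is nonadjacent to $x$. Second, $B_{Y}$ is independent (it is one side of the bipartition of the \emph{induced} complete bipartite subgraph $B$), so $y_{i}y_{j}\notin E(G)$.

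For disjointness, suppose $v\in M_{1}(y_{i})\cap M_{1}(y_{j})$. Then $v$ is adjacent to both $y_{i}$ and $y_{j}$ and $v\notin N_{1}[x]$, so $(x,y_{i},v,y_{j})$ is a cycle of length $4$; it is chordless because $xv\notin E(G)$ and $y_{i}y_{j}\notin E(G)$, contradicting chordality. For nonadjacency, suppose $v_{i}\in M_{1}(y_{i})$ and $v_{j}\in M_{1}(y_{j})$ with $v_{i}v_{j}\in E(G)$; by the disjointness just proved, $v_{i}\neq v_{j}$. I would first dispose of the possible edge $v_{i}y_{j}$: were it present, $(x,y_{i},v_{i},y_{j})$ would be a chordless $4$-cycle (using $xv_{i}\notin E(G)$ and $y_{i}y_{j}\notin E(G)$), impossible; symmetrically $v_{j}y_{i}\notin E(G)$. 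A routine check confirms that $x,y_{i},v_{i},v_{j},y_{j}$ are five distinct vertices (for instance $v_{i}\neq y_{j}$ since $v_{i}$ is at distance $2$ from $x$ while $y_{j}$ is adjacent to $x$). Then $(x,y_{i},v_{i},v_{j},y_{j})$ is a $5$-cycle whose only candidate chords are $xv_{i}$ and $xv_{j}$ (excluded by the distance condition), $y_{i}y_{j}$ (excluded by independence of $B_{Y}$), and $y_{i}v_{j}$ and $v_{i}y_{j}$ (excluded above); hence $G[\{x,y_{i},v_{i},v_{j},y_{j}\}]$ is isomorphic to $C_{5}$, contradicting chordality.

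There is no genuine obstacle here; the argument is elementary. The only point demanding care is the bookkeeping: correctly listing which pairs must be non-edges so that each $C_{4}$ and the final $C_{5}$ is induced, and verifying that the five vertices of the $C_{5}$ are pairwise distinct. The two auxiliary $C_{4}$ arguments eliminating $v_{i}y_{j}$ and $v_{j}y_{i}$ are precisely what makes the concluding $C_{5}$ argument go through.
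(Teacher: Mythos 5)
Your proof is correct and follows essentially the same route as the paper: a chordless $C_{4}$ through $x,y_{i},v,y_{j}$ rules out a common neighbour, and a chordless $C_{5}$ through $x,y_{i},v_{i},v_{j},y_{j}$ rules out adjacency. The only difference is that you explicitly eliminate the potential chords $v_{i}y_{j}$ and $v_{j}y_{i}$ (via auxiliary $C_{4}$'s, or equivalently via the disjointness already proved), a verification the paper leaves implicit.
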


\begin{proof}
If there existed a vertex $v\in M_{1}(y_{i})\cap M_{1}(y_{j})$ then
$G[\{x,y_{i},v,y_{j}\}]$ was isomorphic to $C_{4}$.

If $y_{i}^{\prime}\in M_{1}(y_{i})$ and $y_{j}^{\prime}\in M_{1}(y_{j})$ were
adjacent then $G[\{x,y_{i},y_{i}^{\prime},y_{j}^{\prime},y_{j}\}]$ was
isomorphic to $C_{5}$.
\end{proof}

\begin{lemma}
\label{m2yi} Let $1 \leq i < j \leq k$. Then $M_{2}(y_{i})$ and $M_{2}(y_{j})$
are disjoint and nonadjacent.
\end{lemma}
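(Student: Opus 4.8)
The statement mirrors Lemma~\ref{myi} one level deeper, so the natural strategy is to imitate that proof: suppose the conclusion fails and exhibit a short induced cycle forbidden in a chordal graph. Concretely, I would argue in two parts, disjointness and nonadjacency, each by contradiction.

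For disjointness, suppose $v \in M_{2}(y_{i}) \cap M_{2}(y_{j})$ with $i<j$. By definition of $M_{2}$, there are vertices $y_{i}' \in M_{1}(y_{i}) \cap N_{1}(v)$ and $y_{j}' \in M_{1}(y_{j}) \cap N_{1}(v)$. By Lemma~\ref{myi}, $y_{i}'$ and $y_{j}'$ are distinct and nonadjacent. Now chase the edges: $y_{i}' \in N_{1}(y_{i})$, $y_{j}' \in N_{1}(y_{j})$, and $x$ is adjacent to both $y_{i}$ and $y_{j}$ since $B$ is complete bipartite with $B_{X}=\{x\}$. This gives a closed walk $x, y_{i}, y_{i}', v, y_{j}', y_{j}, x$ on six vertices. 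I would then check that the only edges among these six are the consecutive ones — using that $y_{i},y_{j} \in N_{1}(x) = N_{2}(B_{Y})^{c}$-type distance constraints: $v \in N_{2}(B_{Y})$ forces $v \notin N_{1}(y_{i}), N_{1}(y_{j})$ and $v \neq x$ (indeed $d(v,B_Y)=2$ while $d(x,B_Y)=1$), $y_{i}',y_{j}' \in M_{1}(y_{i}),M_{1}(y_{j}) \subseteq N_{2}(B_{X})$ forces $x y_{i}', x y_{j}' \notin E(G)$, and $y_{i} y_{j} \notin E(G)$ because $B$ is bipartite with $y_{i},y_{j} \in B_{Y}$. The remaining possible chord $y_{i} y_{j}'$ (and symmetrically $y_{j} y_{i}'$) is excluded because $y_{j}' \in N_{2}(B_{X})$ while every neighbour of $y_i$... — here I must be slightly careful, since $y_i$ need not be in $B_X$; instead I would use $y_{j}' \in M_{1}(y_{j})$ together with Lemma~\ref{myi}-style reasoning, noting that $y_{j}'$ adjacent to $y_{i}$ would put $y_{j}' \in M_{1}(y_{i})$, contradicting disjointness of $M_{1}(y_{i})$ and $M_{1}(y_{j})$. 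Hence $G[\{x, y_{i}, y_{i}', v, y_{j}', y_{j}\}]$ is an induced $C_{6}$, contradicting chordality.

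For nonadjacency, suppose $y_{i}'' \in M_{2}(y_{i})$ and $y_{j}'' \in M_{2}(y_{j})$ are adjacent. Pick $y_{i}' \in M_{1}(y_{i}) \cap N_{1}(y_{i}'')$ and $y_{j}' \in M_{1}(y_{j}) \cap N_{1}(y_{j}'')$; by Lemma~\ref{myi} these are distinct and nonadjacent. Then $x, y_{i}, y_{i}', y_{i}'', y_{j}'', y_{j}', y_{j}, x$ is a closed walk on seven vertices, and I would run the same chord-elimination: the $M_{1}/M_{2}$ distance labels relative to $B_{X}$ and $B_{Y}$ kill all long chords, Lemma~\ref{myi} kills $y_{i}'y_{j}'$ and the cross edges to $y_{i},y_{j}$, and bipartiteness kills $y_{i}y_{j}$. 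This yields an induced $C_{7}$, again impossible. (This is exactly the pattern of Lemma~\ref{m2bx}, where the $C_{5}$/$C_{6}$ there become $C_{6}$/$C_{7}$ here because $M_{1}(y_i)$ sits at distance one further from $y_i$ than $B_X$ does from itself.)

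The main obstacle is the chord-elimination bookkeeping: one must verify that none of the vertices on the walk coincide and that no non-consecutive pair is adjacent. The vertex-identification issues and the ``long'' chords are handled cleanly by the distance-layer definitions of $M_{1}$ and $M_{2}$ (a vertex in $N_{i}(B_{X})$ cannot be adjacent to a vertex in $N_{j}(B_{X})$ unless $|i-j| \le 1$, and cannot equal it unless $i=j$), while the ``short'' chords among the $M_{1}$-level and $M_{2}$-level vertices are precisely what Lemmas~\ref{mbx}/\ref{myi} and an inductive application of disjointness are designed to rule out. Once every potential chord is accounted for, the forbidden induced cycle drops out and the lemma follows.
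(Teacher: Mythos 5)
Your overall strategy is exactly the paper's: assume a common vertex (resp.\ an adjacent pair) in $M_{2}(y_{i})$ and $M_{2}(y_{j})$, pull back to $y_{i}'\in M_{1}(y_{i})$ and $y_{j}'\in M_{1}(y_{j})$, invoke Lemma~\ref{myi} to get that $y_{i}'$ and $y_{j}'$ are distinct and nonadjacent, and exhibit an induced $C_{6}$ (resp.\ $C_{7}$). The paper simply asserts the cycles are induced; you attempt the chord-elimination explicitly, and almost all of it is right (the $v\notin N_{1}(y_{i})\cup N_{1}(y_{j})$ step from $v\in N_{2}(B_{Y})$, the $xy_{i}',xy_{j}'$ exclusion from $M_{1}\subseteq N_{2}(B_{X})$, and the $y_{i}y_{j}'$ exclusion via disjointness of $M_{1}(y_{i})$ and $M_{1}(y_{j})$ are all correct).

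However, your chord enumeration is incomplete: you declare $y_{i}y_{j}'$ to be ``the remaining possible chord,'' but the chord $xv$ (and, in the $C_{7}$ case, $xy_{i}''$ and $xy_{j}''$) is not excluded by any of the constraints you list. A vertex $v\in M_{2}(y_{i})$ lies in $N_{2}(B_{Y})$ while $x\in N_{1}(B_{Y})$, and with respect to $B_{X}$ nothing pins $v$ to a layer at distance $\geq 2$ from $x$; so your ``distance-layer'' principle ($|i-j|\le 1$) permits $xv\in E(G)$. The fix is easy but needs to be said: if $xv\in E(G)$ then $G[\{x,y_{i},y_{i}',v\}]$ is an induced $C_{4}$ (using $xy_{i}'\notin E$ and $y_{i}v\notin E$, which you already have), contradicting chordality; equivalently, one shows up front that $M_{2}(B_{Y})\cap N_{1}(x)=\emptyset$ in a chordal graph. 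With that one observation added, your argument closes and matches the paper's proof.
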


\begin{proof}
Assume, on the contrary, that there exists a vertex $v\in M_{2}(y_{i})\cap
M_{2}(y_{j})$. There exist two vertices, $y_{i}^{\prime}\in M_{1}(y_{i})\cap
N_{1}(v)$ and $y_{j}^{\prime}\in M_{1}(y_{j})\cap N_{1}(v)$. By Lemma
\ref{myi}, $y_{i}^{\prime}$ and $y_{j}^{\prime}$ are distinct and nonadjacent.
Hence, $G[\{x,y_{i},y_{i}^{\prime},v,y_{j}^{\prime},y_{j}\}]$ is isomorphic to
$C_{6}$, which is a contradiction.

Assume, on the contrary, that $y_{i}^{\prime\prime}\in M_{2}(y_{i})$ and
$y_{j}^{\prime\prime}\in M_{2}(y_{j})$ are adjacent. There exist two vertices,
$y_{i}^{\prime}\in M_{1}(y_{i})\cap N_{1}(y_{i}^{\prime\prime})$ and
$y_{j}^{\prime}\in M_{1}(y_{j})\cap N_{1}(y_{j}^{\prime\prime})$. By Lemma
\ref{myi}, $y_{i}^{\prime}$ and $y_{j}^{\prime}$ are distinct and nonadjacent.
Hence, $G[\{x,y_{i},y_{i}^{\prime},y_{i}^{\prime\prime},y_{j}^{\prime\prime
},y_{j}^{\prime},y_{j}\}]$ is isomorphic to $C_{7}$, which is a contradiction.
\end{proof}

Define a function $f:2^{M_{2}(B)}\longrightarrow2^{M_{1}(B)}$ as $f(S)=N_{1}(S) \cap (M_{1}(B))$ for every $S \subseteq M_{2}(B)$. In short, we write $f(w)$ instead of
$f(\{w\})$.

\begin{lemma}
\label{fwclique} Let $w \in M_{2}(B)$. Then $G[f(w)]$ is
a clique.
\end{lemma}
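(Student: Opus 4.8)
The plan is to split on whether $w\in M_{2}(B_{X})$ or $w\in M_{2}(B_{Y})$; these alternatives are mutually exclusive by the disjointness half of Lemma~\ref{m2bx}. In each case I argue in two steps: first, that $f(w)$ lies entirely on one side of $M_{1}(B)$ (namely $f(w)\subseteq M_{1}(B_{X})$ when $w\in M_{2}(B_{X})$, and $f(w)\subseteq M_{1}(B_{Y})$ when $w\in M_{2}(B_{Y})$); second, that this one-sided set is a clique. Every step proceeds by assuming the conclusion fails and extracting a short induced cycle ($C_{4}$, $C_{5}$, or $C_{6}$), contradicting chordality. The only structural inputs besides chordality are the defining distance conditions — $M_{1}(B_{X})\subseteq N_{1}(x)\cap N_{2}(B_{Y})$, $M_{1}(B_{Y})\subseteq N_{1}(B_{Y})\cap N_{2}(x)$, $M_{2}(B_{X})\subseteq N_{1}(M_{1}(B_{X}))\cap N_{2}(x)$, $M_{2}(B_{Y})\subseteq N_{1}(M_{1}(B_{Y}))\cap N_{2}(B_{Y})$ — Lemma~\ref{mbx}, and the independence of $B_{Y}$ (which holds because $B$ is an induced complete bipartite subgraph).

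For the first step when $w\in M_{2}(B_{X})$: pick $x'\in M_{1}(B_{X})\cap N_{1}(w)$ (it exists by the definition of $M_{2}(B_{X})$), and suppose for contradiction that some $y'\in M_{1}(B_{Y})\cap N_{1}(w)$ exists, with $y_{i}\in B_{Y}$ a neighbour of $y'$. A one-line subclaim rules out the edge $wy_{i}$ (otherwise $(x,x',w,y_{i})$ is an induced $C_{4}$), and then $(x,x',w,y',y_{i})$ is an induced $C_{5}$ — its only not-immediately-obvious non-chord, $x'y'$, is absent by Lemma~\ref{mbx}. The case of $w\in M_{2}(B_{Y})$ having a neighbour in $M_{1}(B_{X})$ is entirely analogous (first rule out $wx$ by a $C_{4}$, then produce a $C_{5}$).

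For the second step when $w\in M_{2}(B_{X})$: if $x',x''\in M_{1}(B_{X})\cap N_{1}(w)$ were non-adjacent, then $(x,x',w,x'')$ is an induced $C_{4}$ (here $wx\notin E(G)$ because $w\in N_{2}(x)$), a contradiction. I expect the second step for $w\in M_{2}(B_{Y})$ to be the main obstacle, since $B_{Y}$ is not a singleton. Given non-adjacent $y',y''\in M_{1}(B_{Y})\cap N_{1}(w)$, choose neighbours $y_{i},y_{j}\in B_{Y}$ of $y',y''$ respectively and split on whether $y'$ and $y''$ share a neighbour in $B_{Y}$. If they do, that common neighbour together with $y',w,y''$ induces a $C_{4}$. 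If they do not, then $y_{i}\neq y_{j}$, $y'\not\sim y_{j}$, $y''\not\sim y_{i}$, and — after again checking $wx\notin E(G)$ by the $C_{4}$ subclaim — the six vertices $(y_{i},y',w,y'',y_{j},x)$ induce a $C_{6}$. Verifying that this last cycle is induced is the one place real care is needed: all nine chords must be ruled out, and this uses $w\in N_{2}(B_{Y})$ (to kill $wy_{i}$ and $wy_{j}$), $y',y''\in N_{2}(x)$ (to kill $xy'$ and $xy''$), the independence of $B_{Y}$ (to kill $y_{i}y_{j}$), the ``no common neighbour'' assumption (to kill $y'y_{j}$ and $y''y_{i}$), the hypothesis $y'\not\sim y''$, and the $wx$ subclaim. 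Once both steps are established in both cases, $G[f(w)]$ is a clique.
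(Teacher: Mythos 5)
Your proof is correct. The paper's own proof is a one-liner: it picks the (unique) $b\in B$ with $w\in M_{2}(b)$, asserts that it suffices to show $N_{1}(w)\cap M_{1}(b)$ is a clique, and then observes that two nonadjacent vertices $v_{1},v_{2}$ there would make $G[\{b,v_{1},w,v_{2}\}]$ an induced $C_{4}$ (the chords $bw$ and $v_{1}v_{2}$ being absent since $w\in N_{2}(b)$ and by assumption). The difference from your argument is the granularity of the reduction. The paper implicitly uses that $f(w)\subseteq M_{1}(b)$ for that single $b$ --- i.e., when $w\in M_{2}(y_{i})$, every neighbour of $w$ in $M_{1}(B)$ is attached to $y_{i}$ itself --- which follows from the disjointness statements in Lemmas \ref{mbx}, \ref{myi}, \ref{m2bx}, \ref{m2yi} but is not spelled out. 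You only prove the coarser reduction to one side of the bipartition (your step 1, via the $C_{4}$/$C_{5}$ extractions), and you pay for it in step 2 of the $M_{2}(B_{Y})$ case, where you must allow $y'$ and $y''$ to hang off different vertices $y_{i}\neq y_{j}$ of $B_{Y}$ and close the contradiction with an induced $C_{6}$ through $x$; in fact that sub-case is vacuous once one knows $f(w)\subseteq M_{1}(y_{i})$, which is how the paper avoids it. Your nine-chord verification of the $C_{6}$ is complete and correct, so nothing is missing --- your version is longer but also more self-contained, since your step 1 explicitly justifies (part of) what the paper's ``it should be proved that $N_{1}(w)\cap M_{1}(b)$ is a clique'' takes for granted.
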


\begin{proof}
There exists $b\in B$ such that $w\in M_{2}(b)$. It should be proved that
$N_{1}(w)\cap M_{1}(b)$ is a clique. Assume, on the contrary, that there exist
two nonadjacent vertices, $v_{1}$ and $v_{2}$, in $N_{1}(w)\cap M_{1}(b)$.
Then $G[\{b,v_{1},w,v_{2}\}]$ is isomorphic to $C_{4}$, which is a contradiction.
\end{proof}

\begin{lemma}
\label{m2p2} Let $w_{1}$ and $w_{2}$ be two adjacent vertices in $M_{2}(B_{X})
\cup M_{2}(B_{Y})$. Then at least one of the following holds:

\begin{itemize}
\item $f(w_{1}) \subseteq f(w_{2})$.

\item $f(w_{2}) \subseteq f(w_{1})$.
\end{itemize}
\end{lemma}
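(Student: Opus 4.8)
The idea is to assume for contradiction that neither inclusion holds, so there exist $v_1 \in f(w_1) \setminus f(w_2)$ and $v_2 \in f(w_2) \setminus f(w_1)$, and then exhibit a small induced cycle that is not a triangle. First I would record what the hypotheses give: since $w_1,w_2 \in M_2(B_X)\cup M_2(B_Y)$ are adjacent, Lemma~\ref{m2bx} forces them to lie in the \emph{same} side, say $w_1,w_2 \in M_2(b)$ for a common $b \in B$ (if $w_1 \in M_2(B_X)$ and $w_2 \in M_2(B_Y)$ they would be nonadjacent). Then $v_1,v_2 \in M_1(b)$, and by Lemma~\ref{fwclique} each of $f(w_1)$, $f(w_2)$ is a clique; in particular $v_1$ is \emph{not} adjacent to some vertex of $f(w_2)$ only if $v_1 \notin f(w_2)$, which is exactly our assumption, and similarly for $v_2$. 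The key adjacency facts to extract are: $v_1 w_1, v_2 w_2 \in E(G)$; $v_1 w_2 \notin E(G)$ and $v_2 w_1 \notin E(G)$ (since $v_i \notin f(w_{3-i})$ and $f(w_j) = N_1(w_j)\cap M_1(B)$ with $v_i \in M_1(b) \subseteq M_1(B)$); and $w_1 w_2 \in E(G)$.

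Next I would pin down the relationship between $v_1$ and $v_2$. They both lie in $M_1(b) = N_1(b) \cap N_2(B_U)$ where $U$ is the side opposite to $b$'s, so there is a common neighbor $b \in B$ of both. If $v_1 v_2 \notin E(G)$, then $(b, v_1, w_1, w_2, v_2)$ is a $5$-cycle; I must check it is induced, i.e. that $b w_1, b w_2 \notin E(G)$ and $b v_2, b v_1$ are the only $b$-edges — but $b v_1, b v_2 \in E(G)$ are chords, so that ordering does not immediately work. The cleaner route: $v_1,v_2 \in M_1(b)$ means $d(v_i, B_U) = 2$, and $w_i \in M_2(b)$ means $d(w_i, B_V) = 2$ where $V$ is $b$'s own side; pick $b' \in B_U$. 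Then $v_1, v_2 \in N_1(b)$, and since $G[\{b,v_1,v_2\}]$ plus the $C_4$-freeness controls things, I should look instead at the cycle through $b$, $v_1$, $w_1$, $w_2$, $v_2$ and back to $b$: its only possible chords are among $\{b v_1, b v_2, v_1 v_2, v_1 w_2, v_2 w_1\}$, and we have already excluded $v_1 w_2$ and $v_2 w_1$; the edges $b v_1$ and $b v_2$ are forced to be present, so this $5$-cycle has two chords and we cannot conclude directly. The resolution is to consider the $4$-cycle $(v_1, w_1, w_2, v_2)$ together with any common neighbor: if $v_1 v_2 \in E(G)$ we would need to use it differently, and if $v_1 v_2 \notin E(G)$ then $(v_1, w_1, w_2, v_2)$ is a path whose endpoints have the common neighbor $b$, giving the $4$-cycle $(b, v_1, w_1, \dots)$ — no. I therefore expect the actual argument to avoid $b$ and instead use a vertex of $B_U$: since $v_1 \in M_1(b)$ has a neighbor, call it $b'_1 \in B_U$, and likewise $v_2$ has $b'_2 \in B_U$, and all of $B$ induces a complete bipartite graph, one builds a $C_4$ or $C_5$ from $\{v_1, w_1, w_2, v_2, b'_1\}$ (or with $b'_1 = b'_2$).

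The main obstacle, then, is the bookkeeping of which induced cycle to produce and verifying it has no chord: one must carefully use (i) Lemma~\ref{m2bx} to collapse to a single side, (ii) Lemma~\ref{fwclique} to know $f(w_1), f(w_2)$ are cliques so that $v_1 v_2$ being a non-edge is the \emph{only} obstruction to, say, $\{v_1, v_2\}$ lying in a common $f(w_j)$, and (iii) the definitions of $M_1, M_2$ to locate an auxiliary vertex in $B$ realizing the distance-$2$ conditions. Once the right four or five vertices are identified, chordality of $G$ (no induced $C_4$, no induced $C_5$) closes the case, so at least one of the two inclusions must hold. I would present it as: assume both inclusions fail, derive the adjacency/non-adjacency list above, split on whether $v_1 v_2 \in E(G)$, and in each subcase name the forbidden induced cycle.
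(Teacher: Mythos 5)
Your overall plan coincides with the paper's: assume both inclusions fail, pick $v_{1}\in f(w_{1})\setminus f(w_{2})$ and $v_{2}\in f(w_{2})\setminus f(w_{1})$, reduce to a common $b\in B$ with $\{w_{1},w_{2}\}\subseteq M_{2}(b)$ and $v_{1},v_{2}\in M_{1}(b)$, and produce a forbidden induced cycle according to whether $v_{1}v_{2}\in E(G)$. You even name both of the correct cycles: the $4$-cycle on $\{v_{1},w_{1},w_{2},v_{2}\}$ when $v_{1}v_{2}\in E(G)$, and the $5$-cycle $(b,v_{1},w_{1},w_{2},v_{2})$ when $v_{1}v_{2}\notin E(G)$. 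The genuine gap is that you then talk yourself out of the $5$-cycle for a wrong reason: you classify $bv_{1}$ and $bv_{2}$ as chords, but in the cyclic order $b,v_{1},w_{1},w_{2},v_{2},b$ these two pairs are consecutive, so $bv_{1}$ and $bv_{2}$ are edges \emph{of} the cycle and are required to be present. The actual potential chords are $bw_{1}$, $bw_{2}$, $v_{1}w_{2}$, $v_{2}w_{1}$ and $v_{1}v_{2}$: the first two are non-edges because $w_{1},w_{2}\in M_{2}(b)\subseteq N_{2}(B_{V})$ (with $b\in B_{V}$) forces $d(w_{i},b)=2$; the next two are non-edges since $v_{i}\in M_{1}(B)\setminus f(w_{3-i})$ means $v_{i}\notin N_{1}(w_{3-i})$; and the last is a non-edge by the case assumption. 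Hence $(b,v_{1},w_{1},w_{2},v_{2})$ is an induced $C_{5}$ and the contradiction is immediate. The detour you propose through a vertex $b'\in B_{U}$ is unnecessary and is never made concrete, so as written your argument does not close the case $v_{1}v_{2}\notin E(G)$.

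Two smaller points. First, to obtain a single $b$ with $\{w_{1},w_{2}\}\subseteq M_{2}(b)$ you need Lemma \ref{m2yi} in addition to Lemma \ref{m2bx} when both vertices lie in $M_{2}(B_{Y})$ (adjacent vertices there must come from the same $M_{2}(y_{i})$); Lemma \ref{fwclique} plays no role in this lemma. Second, in the case $v_{1}v_{2}\in E(G)$ there is no need for a common neighbor: $\{v_{1},w_{1},w_{2},v_{2}\}$ already induces a $C_{4}$, since its only potential chords $v_{1}w_{2}$ and $v_{2}w_{1}$ are excluded by the choice of $v_{1}$ and $v_{2}$.
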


\begin{proof}
There exists $b\in B$ such that $\{w_{1},w_{2}\}\subseteq M_{2}(b)$. It should
be proved that at least one of the following inclusions holds:

\begin{itemize}
\item $N_{1}(w_{1})\cap M_{1}(b)\subseteq N_{1}(w_{2})\cap M_{1}(b)$.

\item $N_{1}(w_{2})\cap M_{1}(b)\subseteq N_{1}(w_{1})\cap M_{1}(b)$.
\end{itemize}

Assume, on the contrary, that there exist $v_{1}\in(N_{1}(w_{1})\setminus
N(w_{2}))\cap M_{1}(b)$ and $v_{2}\in(N_{1}(w_{2})\setminus N_{1}(w_{1}))\cap
M_{1}(b)$. If $v_{1}v_{2}\in E$ then $G[\{v_{1},v_{2},w_{2},w_{1}\}]$ is
isomorphic to $C_{4}$. Otherwise, $G[\{v_{1},b,v_{2},w_{2},w_{1}\}]$ is
isomorphic to $C_{5}$. In both cases we obtained a contradiction, which
completes the proof.
\end{proof}

\begin{lemma}
\label{fcclique} Let $C$ be a connected component of $G[M_{2}(B)]$. Then $f\left(  V\left(  C\right)  \right)  $ is a clique.
\end{lemma}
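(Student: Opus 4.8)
The plan is to localize $C$ above a single vertex of $B$ and then, assuming $f(V(C))$ is not a clique, to produce a chordless cycle of length at least five in $G$. For the localization, observe that $B_X=\{x\}$ gives $M_2(B_X)=M_2(x)$, and $M_1(B_Y)=\bigcup_{i=1}^{k}M_1(y_i)$ gives $M_2(B_Y)=\bigcup_{i=1}^{k}M_2(y_i)$; by Lemma~\ref{m2bx} the sets $M_2(B_X),M_2(B_Y)$ are disjoint and non-adjacent, and by Lemma~\ref{m2yi} so are $M_2(y_1),\dots,M_2(y_k)$. Since $C$ is connected, $V(C)\subseteq M_2(b)$ for some $b\in B$. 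Recall that for $w\in M_2(b)$ one has $f(w)=N_1(w)\cap M_1(b)$ (this is what the proof of Lemma~\ref{fwclique} uses), so $f(V(C))\subseteq M_1(b)\subseteq N_1(b)$ while $V(C)\subseteq N_2(b)$; thus $b$, the vertices of $f(V(C))$ and the vertices of $V(C)$ occupy, respectively, the three distance levels $0,1,2$ from $b$.

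Now suppose $f(V(C))$ is not a clique and fix distinct, non-adjacent $v_1,v_2\in f(V(C))$ with $v_t\in f(w_t)$, $w_t\in V(C)$. If $w_1=w_2$ this contradicts Lemma~\ref{fwclique}, so $w_1\neq w_2$ and we fix a shortest path from $w_1$ to $w_2$ in $G[V(C)]$, say $P\colon w_1=u_0,u_1,\dots,u_m=w_2$, which is induced in $G$ with $m\geq1$. Then $v_1\sim u_0$ and $v_2\sim u_m$, and $v_1\not\sim u_m$, $v_2\not\sim u_0$, for otherwise $v_1$ and $v_2$ would both lie in $f(u_m)$ or both in $f(u_0)$, contradicting Lemma~\ref{fwclique}. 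Put $q=\min\{\,i:v_2\sim u_i\,\}$ (so $1\leq q\leq m$) and $p=\max\{\,i<q:v_1\sim u_i\,\}$, which is well defined with $0\leq p<q$ since $v_1\sim u_0$. Let $\gamma$ be the closed walk $b,v_1,u_p,u_{p+1},\dots,u_q,v_2,b$. Its consecutive vertices are adjacent ($b\sim v_1,v_2$ because $v_1,v_2\in M_1(b)$; $v_1\sim u_p$ and $v_2\sim u_q$ by the choice of $p,q$; and $u_iu_{i+1}$ along $P$), and its $q-p+4\geq5$ vertices are pairwise distinct (the $u_i$ are distinct along $P$; the three distance levels from $b$ separate $b$, the $v_t$'s and the $u_i$'s; and $v_1\neq v_2$). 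Hence $\gamma$ is a cycle.

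It remains to check that $\gamma$ has no chord. An edge $bu_i$ is impossible since $d(b,u_i)=2$; $v_1v_2$ is excluded by the choice of $v_1,v_2$; an edge $u_iu_j$ with $|i-j|\geq2$ is excluded because $P$ is induced; an edge $v_2u_i$ with $p\leq i<q$ is excluded by the minimality of $q$; an edge $v_1u_i$ with $p<i<q$ is excluded by the maximality of $p$; and $v_1\sim u_q$ is impossible, since together with $v_2\sim u_q$ it would place both $v_1$ and $v_2$ in $f(u_q)$, contradicting Lemma~\ref{fwclique}. So $G$ contains a chordless cycle of length at least five, contradicting its being chordal; therefore $f(V(C))$ is a clique.

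I expect the chordlessness of $\gamma$ to be the only real obstacle: its sole potential chords are the edges $v_1u_i$ and $v_2u_i$, and choosing $p$ as large as possible below $q$ and $q$ as small as possible is exactly what eliminates all of them, leaving only the boundary collision $v_1\sim u_q$, which Lemma~\ref{fwclique} rules out. The localization step is routine but should not be skipped, since it is what supplies a common anchor $b$ together with the identity $f(w)=N_1(w)\cap M_1(b)$. I would also remark that Lemma~\ref{m2p2} by itself does not suffice here: the inclusions it produces along $P$ need not force $\bigcup_i f(u_i)$ to be a clique, so the cycle argument genuinely invokes chordality globally rather than edge by edge.
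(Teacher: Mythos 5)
Your proof is correct and follows essentially the same route as the paper: anchor the component at a single $b\in B$, connect neighbors of the two nonadjacent vertices $v_1,v_2$ by a path inside $C$, and close up an induced cycle $b,v_1,\dots,v_2,b$ of length at least $5$. The only cosmetic difference is that the paper takes a shortest path between the sets $N_1(v_1)\cap M_2(b)$ and $N_1(v_2)\cap M_2(b)$, which makes chordlessness immediate, whereas you take a shortest $w_1$--$w_2$ path and trim it with your $\min$/$\max$ choice of $p$ and $q$ to the same effect.
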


\begin{proof}
There exists $b\in B$ such that $C\subseteq M_{2}(b)$. Assume on the contrary
that $f\left(  V\left(  C\right)  \right)  $ is not a clique. Then there exist
two nonadjacent vertices, $v_{1}$ and $v_{2}$, in $f\left(  V\left(  C\right)
\right)  $.

If there existed a vertex $w\in N_{1}(v_{1})\cap N_{1}(v_{2})\cap M_{2}(b)$,
then $\{v_{1},v_{2}\}\subseteq f(w)$, which is a contradiction to Lemma
\ref{fwclique}. Hence, $N_{1}(v_{1})\cap N_{1}(v_{2})\cap M_{2}%
(b)=\emptyset$.

Let $P$ be a shortest path in $C$ between $N_{1}(v_{1})\cap M_{2}(b)$ and
$N_{1}(v_{2})\cap M_{2}(b)$. It holds that $G[V(P)\cup\{v_{1},b,v_{2}\}]$ is
an induced cycle of length $3+\left\vert V(P)\right\vert \geq5$, which
contradicts the fact that $G$ is chordal. Therefore, $f\left(  V\left(
C\right)  \right)  $ is a clique.
\end{proof}

\begin{lemma}
\label{monotonepath} Assume \ that \ $C$ is a connected component of
\ $G[M_{2}(B)]$, \ \ $w_{1},w_{2}\in V\left(  C\right)  $, and
$P=(w_{1}=u_{1},...,u_{r}=w_{2})$ is a shortest path in $C$ between $w_{1}$
and $w_{2}$.

\emph{(i)} Suppose there exists a vertex $v\in f(w_{1})\setminus f(w_{2})$.
Then there exists an index $1\leq s<r$ such that $v\in f(u_{i})\iff i\leq s$.

\emph{(ii)} Suppose there exists a vertex $v\in f(w_{2})\setminus f(w_{1})$.
Then there exists an index $1<t\leq r$ such that $v\in f(u_{i})\iff i\geq t$.
\end{lemma}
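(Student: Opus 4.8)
The plan is to reduce both parts to a single structural fact: along the shortest path $P$, the positions of the neighbours of any fixed vertex $v$ form an interval of consecutive indices. So first I would fix $v$ and set $A=\{\,i:1\le i\le r,\ v\in f(u_{i})\,\}$. Since $f$ takes values in $2^{M_{1}(B)}$, every element of $f(w_{1})$ or of $f(w_{2})$ lies in $M_{1}(B)$; hence in both parts $v\in M_{1}(B)$, and the condition $v\in f(u_{i})$ is simply $vu_{i}\in E(G)$. I would then prove that $A$ is a set of consecutive integers. Granting this, part (i) is immediate: $v\in f(w_{1})=f(u_{1})$ gives $1\in A$, and $v\notin f(w_{2})=f(u_{r})$ gives $r\notin A$, so $A=\{1,\dots,s\}$ for some $1\le s<r$, which is exactly the assertion. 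Part (ii) is the mirror image (reverse the path): now $r\in A$, $1\notin A$, and therefore $A=\{t,\dots,r\}$ for some $1<t\le r$.

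To show that $A$ consists of consecutive integers I would argue by contradiction using chordality. Take $a\in A$ and let $b$ be the least index greater than $a$ that also lies in $A$ (if there is no such $b$, there is nothing to check at $a$). By minimality of $b$ we have $vu_{m}\notin E(G)$ for every $a<m<b$. The vertices $u_{a},u_{a+1},\dots,u_{b}$ form a subpath of the shortest path $P$, hence a shortest path between $u_{a}$ and $u_{b}$, hence an induced path of $C$; and since $C=G[V(C)]$ is an induced subgraph of $G$, this path has no chords in $G$ either. Adjoining $v$ — which is adjacent to $u_{a}$ and $u_{b}$, to no $u_{m}$ with $a<m<b$, and which is distinct from all of $u_{a},\dots,u_{b}$ because $v\in M_{1}(B)$ while $V(P)\subseteq V(C)\subseteq M_{2}(B)$ and $M_{1}(B)\cap M_{2}(B)=\emptyset$ — we obtain an induced cycle of $G$ on $b-a+2$ vertices. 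Since $G$ is chordal this forces $b-a+2=3$, i.e. $b=a+1$. Thus consecutive elements of $A$ differ by $1$, so $A$ is an interval, as required.

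The only step carrying real content is this cycle argument; everything else is bookkeeping. The main point to get right is that what we construct really is an induced cycle of $G$: this needs $u_{a},\dots,u_{b}$ to be chordless in $G$ (not merely in $C$), which is why $C$ must be taken as an induced subgraph, and it needs $v\neq u_{m}$ for every $m$, which uses $M_{1}(B)\cap M_{2}(B)=\emptyset$. That disjointness is a short consequence of chordality which I would record first: a vertex of $M_{1}(B_{X})$ lies at distance $1$ from $x$ whereas one of $M_{2}(B_{X})$ lies at distance $2$ from $x$, and symmetrically with $B_{Y}$ for $M_{1}(B_{Y})$ and $M_{2}(B_{Y})$, while the cross-intersections $M_{1}(B_{X})\cap M_{2}(B_{Y})$ and $M_{2}(B_{X})\cap M_{1}(B_{Y})$ would each yield an induced $C_{4}$ through $x$ and a vertex of $B_{Y}$.
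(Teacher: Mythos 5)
Your proof is correct and follows essentially the same route as the paper: both reduce the claim to the fact that, by chordality, the set of indices $i$ with $v\in f(u_{i})$ must be an interval along the shortest path, since a gap would yield an induced cycle $G[\{v,u_{a},\dots,u_{b}\}]$ of length at least $4$. Your write-up is if anything slightly more careful than the paper's (you take $b$ minimal so the cycle is genuinely induced, and you verify $v\notin V(P)$ via $M_{1}(B)\cap M_{2}(B)=\emptyset$), but the underlying argument is identical.
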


\begin{proof}
\emph{(i) }Define

\[
s=min\{i:1\leq i<r,v\in f(u_{i}),v\not \in f(u_{i+1})\},
\]

and assume on the contrary that there exists $s+2\leq s^{\prime}\leq r$ such
that $v\in f(u_{s^{\prime}})$. Then $G[v,u_{s},...,u_{s^{\prime}}]$ is an
induced cycle of length $s^{\prime}-s+2\geq4$, which contradicts the fact that
the graph is chordal.

\emph{(ii) }Similar to Case \emph{(i)}.
\end{proof}

\begin{lemma}
\label{maxc} Let $C$ be a connected component of $G[M_{2}(B)]$, and let $w_{1}$ and $w_{2}$ be two vertices in $C$. Then there
exists a vertex $w\in V\left(  C\right)  $ such that $f(w_{1})\cup
f(w_{2})\subseteq f(w)$.
\end{lemma}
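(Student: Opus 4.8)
The plan is to reduce the general case of two vertices $w_1,w_2$ in a component $C$ to the ``adjacent'' case handled by Lemma~\ref{m2p2}, by walking along a shortest path between them and using the monotonicity established in Lemma~\ref{monotonepath}. Fix $b\in B$ with $C\subseteq M_2(b)$, and let $P=(w_1=u_1,\dots,u_r=w_2)$ be a shortest path in $C$. I claim that along $P$ the sets $f(u_i)$ behave like a ``tent'': each element $v$ of $\bigcup_i f(u_i)$ occupies a contiguous block of indices (this is exactly what Lemma~\ref{monotonepath} gives when $v$ lies in $f(w_1)\setminus f(w_2)$ or in $f(w_2)\setminus f(w_1)$; the case $v\in f(w_1)\cap f(w_2)$ needs a separate but identical argument, applying the chordality trick of Lemma~\ref{monotonepath} to the cycle $G[v,u_1,\dots,u_r]$ to rule out a ``gap''). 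Granting the block property, consider the set $A=f(w_1)\cup f(w_2)$; I want to find a single $u_s$ on the path with $A\subseteq f(u_s)$.

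The key step is to show that some index is hit by \emph{every} element of $A$. For each $v\in f(w_1)$, by Lemma~\ref{monotonepath}(i) (or trivially, if $v\in f(w_2)$ too) the set of indices $i$ with $v\in f(u_i)$ is an initial-or-full block $\{1,\dots,s_v\}$; for each $v\in f(w_2)$ it is a terminal-or-full block $\{t_v,\dots,r\}$. So I need $\min_{v\in f(w_1)} s_v \ge \max_{v\in f(w_2)} t_v$, i.e. I need to exclude the configuration where some $v_1\in f(w_1)$ drops out of $f$ strictly before some $v_2\in f(w_2)$ enters, so that $v_1\notin f(u_i)$ whenever $v_2\in f(u_i)$. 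Suppose this bad configuration occurs, witnessed by indices $s<t$ with $v_1\in f(u_s)\setminus f(u_{s+1})$ and $v_2\in f(u_t)\setminus f(u_{t-1})$. Then $u_s$ and $u_t$ are two vertices of $C$ with $v_1\in f(u_s)\setminus f(u_t)$ and $v_2\in f(u_t)\setminus f(u_s)$, and the subpath $(u_s,\dots,u_t)$ is a shortest path between them in $C$ (being a subpath of a shortest path). If $v_1v_2\in E(G)$, then walking $v_1,u_s,\dots,u_t,v_2$ back to $v_1$ is an induced cycle of length $\ge 4$ in the chordal graph $G$ (induced because $v_1,v_2\in M_1(b)$ are nonadjacent to the ``far'' part of the path by the block property, and $u_s,\dots,u_t$ is shortest); if $v_1v_2\notin E(G)$, insert $b$ to get $v_1,u_s,\dots,u_t,v_2,b,v_1$, again an induced cycle of length $\ge 5$ — contradicting chordality. (One must check $v_1\ne v_2$: if $v_1=v_2$ then by the block property it lies in $f(u_i)$ for all $s\le i\le t$, contradicting $v_1\notin f(u_{s+1})$ when $s+1\le t$, and the case $t=s+1$ with $v_1=v_2$ contradicts $v_2\notin f(u_{s})=f(u_{t-1})$.)

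Hence no bad configuration exists, so there is an index $s$ with $\min_{v\in f(w_1)} s_v \ge s \ge \max_{v\in f(w_2)} t_v$, and then by the block property every $v\in f(w_1)\cup f(w_2)$ lies in $f(u_s)$; take $w=u_s\in V(C)$. The main obstacle is getting the induced-cycle arguments genuinely airtight — specifically verifying that the cycles exhibited above have no chords, which is where the block/contiguity property and the shortest-path choice of $P$ are both used, and where one must be slightly careful about the boundary cases $v_1=v_2$ and $t=s+1$. Everything else is bookkeeping about contiguous index blocks.
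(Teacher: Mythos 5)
Your proof is correct and follows essentially the same route as the paper's: walk along a shortest path $P$ in $C$, use the contiguity of the index blocks from Lemma~\ref{monotonepath} to locate the last index whose $f$-set contains $f(w_1)$ and the first whose $f$-set contains $f(w_2)$, and derive an induced cycle of length at least $4$ from the witnesses $v_1,v_2$ if these indices are separated. The only (harmless) divergence is that the paper invokes Lemma~\ref{fcclique} to conclude $v_1v_2\in E(G)$ outright, whereas you case-split on $v_1v_2\in E(G)$ and insert $b$ otherwise; your extra care about $v_1=v_2$ and the block property for $v\in f(w_1)\cap f(w_2)$ fills in details the paper leaves implicit.
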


\begin{proof}
If $f(w_{1})\subseteq f(w_{2})$ then $w=w_{2}$. Similarly, if $f(w_{2}%
)\subseteq f(w_{1})$ then $w=w_{1}$. Hence, assume that $f(w_{1})\setminus
f(w_{2})\neq\emptyset$ and $f(w_{2})\setminus f(w_{1})\neq\emptyset$.

Let $P=(w_{1}=u_{1},...,u_{r}=w_{2})$ be a shortest path in $C$ between
$w_{1}$ and $w_{2}$. Define

\[ s=\max\{i:1\leq i\leq r,f(u_{i})\supseteq
f(w_{1})\}; t=\min\{i:s\leq i\leq r,f(u_{i})\supseteq f(w_{2})\}.
\]

If $s=t$ then $f(w_{1})\cup f(w_{2})\subseteq f(u_{s})$. Therefore, one may
assume that $s<t$.

There exist $v_{1}\in f(u_{s})\setminus f(u_{s+1})$ and $v_{2}\in
f(u_{t})\setminus f(u_{t-1})$. By Lemma \ref{monotonepath}, $v_{1}$ is not
adjacent to $u_{s+1},...,u_{r}$ and $v_{2}$ is not adjacent to $u_{1}%
,...,u_{t-1}$. By Lemma \ref{fcclique}, $v_{1}v_{2}\in E\left(  G\right)  $.
Hence, $G[v_{1},u_{s},...,u_{t},v_{2}]$ is an induced cycle of size
$t-s+3\geq4$, which is a contradiction.
\end{proof}

\begin{lemma}
\label{maxcw} Let $C$ be a connected component of $G[M_{2}(B)]$. Then there exists a vertex $w\in V\left(  C\right)  $ such
that $f(w)=f(V\left(  C\right)  )$.
\end{lemma}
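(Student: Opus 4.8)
The plan is to promote the pairwise statement of Lemma~\ref{maxc} to a statement about the whole component by induction on $|V(C)|$. Concretely, I would enumerate the vertices of $C$ as $w_{1},\dots,w_{m}$ and prove by induction on $j$ that there is a vertex $z_{j}\in V(C)$ with $f(w_{1})\cup\cdots\cup f(w_{j})\subseteq f(z_{j})$. The base case $j=1$ is trivial with $z_{1}=w_{1}$. For the inductive step, I have $z_{j}$ with $\bigcup_{i\le j}f(w_{i})\subseteq f(z_{j})$, and I apply Lemma~\ref{maxc} to the pair $z_{j},w_{j+1}\in V(C)$ to obtain $z_{j+1}\in V(C)$ with $f(z_{j})\cup f(w_{j+1})\subseteq f(z_{j+1})$; then $\bigcup_{i\le j+1}f(w_{i})\subseteq f(z_{j})\cup f(w_{j+1})\subseteq f(z_{j+1})$, closing the induction. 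Taking $j=m$ yields a vertex $w:=z_{m}\in V(C)$ with $\bigcup_{i=1}^{m}f(w_{i})\subseteq f(w)$.

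It then remains to observe the reverse inclusion. Since $f(w)=N_{1}(w)\cap M_{1}(B)$ and $w\in V(C)\subseteq M_{2}(B)$, every neighbour of $w$ lying in $M_{1}(B)$ is counted in $f(V(C))=\bigcup_{i=1}^{m}f(w_{i})$ (indeed $w=w_{i_{0}}$ for some $i_{0}$, so $f(w)\subseteq f(V(C))$). Combined with $f(V(C))=\bigcup_{i=1}^{m}f(w_{i})\subseteq f(w)$ from the previous paragraph, this gives $f(w)=f(V(C))$, as desired.

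I do not expect any serious obstacle here: the lemma is essentially a formal consequence of Lemma~\ref{maxc} together with the fact that a finite family of sets each contained in a common member has that member as its union's upper bound. The only point requiring a little care is making sure the induction stays inside $V(C)$ at every step — which it does, because Lemma~\ref{maxc} produces its witness vertex $w$ within $C$ — and noting that $f$ is monotone under inclusion of arguments (immediate from its definition as $f(S)=N_{1}(S)\cap M_{1}(B)$), so that a chain of set-inclusions among the $f(z_{j})$ really does accumulate all the $f(w_{i})$. An alternative, even shorter route is to combine Lemma~\ref{maxc} with Lemma~\ref{fcclique}: since $f(V(C))$ is a clique, it is in particular finite, and one may pick $w\in V(C)$ maximizing $|f(w)|$; Lemma~\ref{maxc} applied to $w$ and any $w_{i}$ forces $f(w_{i})\subseteq f(w)$ by maximality, again yielding $f(w)=f(V(C))$.
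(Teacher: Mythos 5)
Your proof is correct and matches the paper, which simply states that the lemma ``follows immediately from Lemma \ref{maxc}''; your induction spells out exactly the iteration that remark leaves implicit. The alternative argument via a vertex maximizing $|f(w)|$ is also valid (and does not even need Lemma \ref{fcclique}, only finiteness of $G$).
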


\begin{proof}
Follows immediately from Lemma \ref{maxc}.
\end{proof}

\newpage

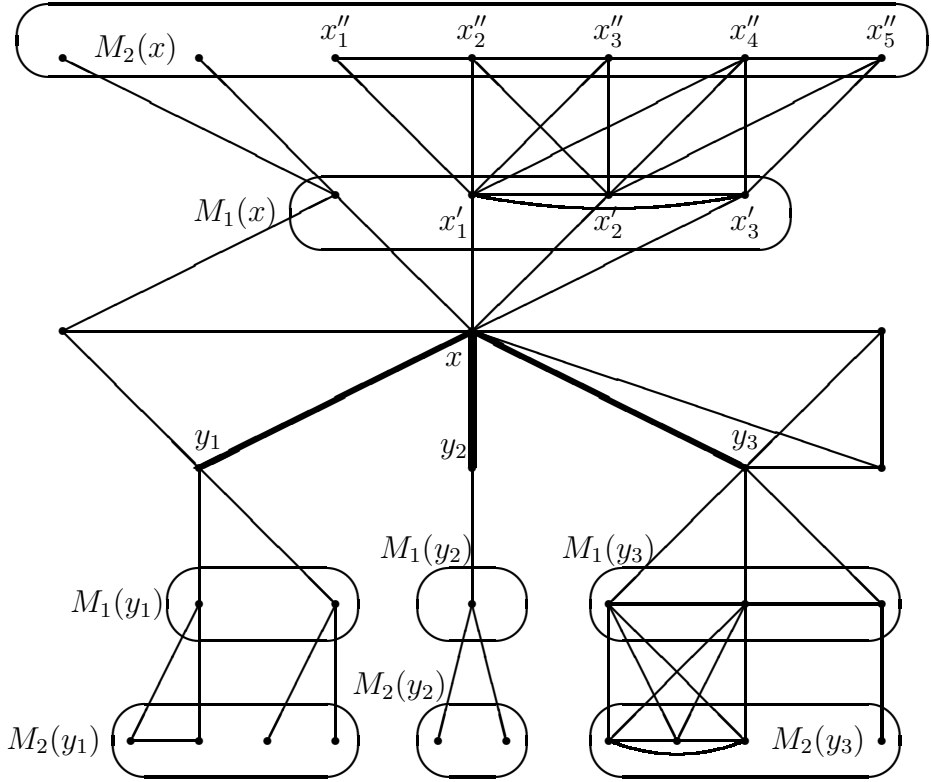
\begin{figure}[h]
\setlength{\unitlength}{1.21cm}

\begin{picture}(-5,9)\thicklines
\put(5,5){\circle*{0.1}}
\put(4.8,4.7){\makebox(0,0){$x$}}
\multiput(2,3.5)(3,0){3}{\circle*{0.1}}
\put(2.1,3.8){\makebox(0,0){$y_{1}$}}
\put(4.8,3.7){\makebox(0,0){$y_{2}$}}
\put(8,3.8){\makebox(0,0){$y_{3}$}}
\multiput(4.95,5)(0.02,0){6}{\line(2,-1){3}}
\multiput(4.97,5)(0.02,0){4}{\line(0,-1){1.5}}
\multiput(4.95,5)(0.02,0){6}{\line(-2,-1){3}}
\multiput(5,6.5)(1.5,0){3}{\circle*{0.1}}
\put(5,5){\line(2,1){3}}
\put(5,5){\line(1,1){1.5}}
\put(5,5){\line(0,1){1.5}}
\multiput(3.5,8)(1.5,0){5}{\circle*{0.1}}
\qbezier(5,6.5)(6.5,6.2)(8,6.5)
\put(5,6.5){\line(1,0){3}}
\put(5,6.5){\line(-1,1){1.5}}
\put(5,6.5){\line(0,1){1.5}}
\put(5,6.5){\line(1,1){1.5}}
\put(5,6.5){\line(2,1){3}}
\put(6.5,6.5){\line(-1,1){1.5}}
\put(6.5,6.5){\line(0,1){1.5}}
\put(6.5,6.5){\line(1,1){1.5}}
\put(6.5,6.5){\line(2,1){3}}
\put(8,6.5){\line(0,1){1.5}}
\put(8,6.5){\line(1,1){1.5}}
\put(3.5,8){\line(1,0){6}}
\put(3.5,8.3){\makebox(0,0){$x_{1}''$}}
\put(5,8.3){\makebox(0,0){$x_{2}''$}}
\put(6.5,8.3){\makebox(0,0){$x_{3}''$}}
\put(8,8.3){\makebox(0,0){$x_{4}''$}}
\put(9.5,8.3){\makebox(0,0){$x_{5}''$}}
\put(3.5,6.5){\circle*{0.1}}
\put(2,8){\circle*{0.1}}
\put(0.5,8){\circle*{0.1}}
\put(5,5){\line(-1,1){3}}
\put(3.5,6.5){\line(-2,1){3}}
\put(4.8,6.2){\makebox(0,0){$x_{1}'$}}
\put(6.5,6.2){\makebox(0,0){$x_{2}'$}}
\put(8,6.2){\makebox(0,0){$x_{3}'$}}
\put(5,8.2){\oval(10,0.8)}
\put(5.75,6.3){\oval(5.5,0.8)}
\put(1.3,8.1){\makebox(0,0){$M_{2}(x)$}}
\put(2.4,6.3){\makebox(0,0){$M_{1}(x)$}}
\multiput(2,2)(1.5,0){6}{\circle*{0.1}}
\put(2,3.5){\line(0,-1){1.5}}
\put(2,3.5){\line(1,-1){1.5}}
\put(5,3.5){\line(0,-1){1.5}}
\put(8,3.5){\line(-1,-1){1.5}}
\put(8,3.5){\line(0,-1){1.5}}
\put(8,3.5){\line(1,-1){1.5}}
\multiput(1.25,0.5)(0.75,0){4}{\circle*{0.1}}
\multiput(4.625,0.5)(0.75,0){2}{\circle*{0.1}}
\multiput(6.5,0.5)(0.75,0){3}{\circle*{0.1}}
\put(9.5,0.5){\circle*{0.1}}
\put(2,2){\line(-1,-2){0.75}}
\put(2,2){\line(0,-1){1.5}}
\put(3.5,2){\line(-1,-2){0.75}}
\put(3.5,2){\line(0,-1){1.5}}
\put(5,2){\line(-1,-4){0.375}}
\put(5,2){\line(1,-4){0.375}}
\put(6.5,2){\line(1,-1){1.5}}
\put(6.5,2){\line(1,-2){0.75}}
\put(6.5,2){\line(0,-1){1.5}}
\put(8,2){\line(-1,-1){1.5}}
\put(8,2){\line(-1,-2){0.75}}
\put(8,2){\line(0,-1){1.5}}
\put(9.5,2){\line(0,-1){1.5}}
\put(1.25,0.5){\line(1,0){0.75}}
\put(6.5,0.5){\line(1,0){1.5}}
\qbezier(6.5,0.5)(7.25,0.2)(8,0.5)
\put(6.5,2){\line(1,0){3}}
\put(2.7,2){\oval(2.1,0.8)}
\put(5,2){\oval(1.2,0.8)}
\put(8,2){\oval(3.4,0.8)}
\put(2.4,0.5){\oval(2.7,0.8)}
\put(5,0.5){\oval(1.2,0.8)}
\put(8,0.5){\oval(3.4,0.8)}
\put(1.1,2){\makebox(0,0){$M_{1}(y_{1})$}}
\put(6.5,2.6){\makebox(0,0){$M_{1}(y_{3})$}}
\put(0.4,0.5){\makebox(0,0){$M_{2}(y_{1})$}}
\put(8.8,0.5){\makebox(0,0){$M_{2}(y_{3})$}}
\put(4.5,2.6){\makebox(0,0){$M_{1}(y_{2})$}}
\put(4.2,1.1){\makebox(0,0){$M_{2}(y_{2})$}}
\put(9.5,5){\circle*{0.1}}
\put(9.5,3.5){\circle*{0.1}}
\put(9.5,5){\line(-1,0){4.5}}
\put(9.5,3.5){\line(-1,0){1.5}}
\put(9.5,3.5){\line(0,1){1.5}}
\put(9.5,3.5){\line(-3,1){4.5}}
\put(9.5,5){\line(-1,-1){1.5}}
\put(0.5,5){\circle*{0.1}}
\put(0.5,5){\line(1,0){4.5}}
\put(0.5,5){\line(1,-1){1.5}}
\put(0.5,5){\line(2,1){3}}
\end{picture}

\caption{$B = G[\{x, y_{1},y_{2},y_{3}\}]$ is a generating subgraph. By Lemmas
\ref{mbx} and \ref{myi}, $M_{1}(x)$, $M_{1}(y_{1})$, $M_{1}(y_{2})$, $M_{1}(y_{3})$ are
mutually disjoint and nonadjacent. By Lemmas \ref{m2bx} and \ref{m2yi}, also
$M_{2}(x)$, $M_{2}(y_{1})$, $M_{2}(y_{2})$, $M_{2}(y_{3})$ are mutually
disjoint and nonadjacent. Moreover, $C = G[\{x_{1}^{\prime\prime}%
,...,x_{5}^{\prime\prime}\}]$ is a connected component of $M_{2}(x)$, and
$(x_{1}^{\prime\prime},...,x_{5}^{\prime\prime})$ is a shortest path between
$x_{1}^{\prime\prime}$ and $x_{5}^{\prime\prime}$ in $C$. $f(x_{1}%
^{\prime\prime})=\{x_{1}^{\prime}\}$, $f(x_{2}^{\prime\prime})=\{x_{1}%
^{\prime}, x_{2}^{\prime}\}$, $f(x_{3}^{\prime\prime})=\{x_{1}^{\prime}%
,x_{2}^{\prime}\}$, $f(x_{4}^{\prime\prime})=\{x_{1}^{\prime}, x_{2}^{\prime},
x_{3}^{\prime}\}$, $f(x_{5}^{\prime\prime})=\{x_{2}^{\prime}, x_{3}^{\prime
}\}$. By Lemma \ref{fcclique}, $f(C)$ is a clique. By Lemma \ref{maxcw},
$f(C)=f(x_{4}^{\prime\prime})$. Note that there are vertices which are
adjacent to both $B_{X}$ and $B_{Y}$, but they are not important for the
algorithm.}%
\label{conj}
\end{figure}

\vspace{0.4cm}

The following algorithm receives as its input a chordal graph $G$ and an
induced complete bipartite subgraph $B$ of $G$. The algorithm decides whether
$B$ is generating.

\newpage

\begin{algorithm}
\DontPrintSemicolon
{ {\bf find} $M(B)$.  \; }
{ {\bf find} $M_{2}(B)$.  \; }
{ {\bf find} the connected components of $M_{2}(B)$.  \; }
{ \ForEach{$w \in M_{2}(B)$ }
{ {\bf calculate} $|f(w)|$. }
}
{ $S \leftarrow \emptyset$.  \; }
{ \ForEach{connected component $C$ of $M_{2}(B)$ }
{
{ {\bf find} a vertex $w_{C} \in C$ such that $|f(w_{C})|$ is maximal. \; }
{ $S \leftarrow S \cup \{w_{C}\}$.  \; }
}
}
{ \If{$S$ dominates $M(B)$}
{ {\bf output} ``$B$ is generating". }
}
{ \Else
{
{ {\bf output} ``$B$ is not generating". }
}
}
\caption{Recognizing generating subgraphs in chordal graphs\label{genalg}}
\end{algorithm}

\textbf{Correctness of Algorithm \ref{genalg}:} The set $S$ is independent,
because it contains one vertex from each connected component of $M_{2}(B)$. Let $S^{\prime}$ be another independent set of $M_{2}(B)$. We prove that every vertex in $M_{1}(B)$ which is dominated by $S^{\prime}$ is also 
dominated
by $S$. Assume on the contrary that there exists a vertex $v\in M_{1}(B)$ which is dominated by $S^{\prime}$, but not by $S$.
Let $w^{\prime}\in S^{\prime}\cap N(v)$, let $C$ be the connected component of $M_{2}(B)$ which contains $w^{\prime}$, and let $w$ be
the vertex in $C$ which belongs to $S$. It follows from the construction of
$S$ and Lemma \ref{maxcw} that $f(w^{\prime})\subseteq f(w)$, which is a
contradiction. Therefore, $N_{1}(S^{\prime}) \cap M_{1}(B) \subseteq N_{1}(S) \cap M_{1}(B)$.

If $S$ dominates $M_{1}(B)$, then let $S^{\ast}$ be any
maximal independent set of $G[V(G)\setminus N_{1}[B]]$ which contains $S$.
Clearly, $S^{\ast}$ is a witness that $B$ is generating. However, if $S$ does
not dominate $M_{1}(B)$, then there does not exist an
independent set in $M_{2}(B)$ which dominates $M_{1}(B)$, and therefore, $B$ is not generating.

\textbf{Complexity of Algorithm \ref{genalg}:} Each stage of the algorithm can
be implemented in $O(|V|^{2})$ time. Therefore, this goes in parallel with the
time complexity of the whole algorithm.

\begin{corollary}
\label{rechordal} Recognizing relating edges and generating subgraphs in
chordal graphs can be done polynomially.
\end{corollary}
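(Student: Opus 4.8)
The plan is to obtain the corollary directly from the correctness and complexity analysis of Algorithm \ref{genalg} that precedes it. For the generating-subgraph part there is nothing more to do: given a chordal graph $G$ together with an induced complete bipartite subgraph $B$, Algorithm \ref{genalg} outputs whether or not $B$ is generating, its correctness has just been verified, and since each of its finitely many stages runs in $O(|V|^2)$ time, the whole procedure is polynomial. So I would simply invoke Algorithm \ref{genalg} as the witness that recognizing generating subgraphs in chordal graphs is polynomial.

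For the relating-edge part I would first recall the definition: an edge $xy\in E(G)$ is a relating edge exactly when the induced subgraph $B=G[\{x,y\}]$ is a generating subgraph of $G$. Note that $B$ is legitimate input to Algorithm \ref{genalg}, since $B\cong K_{1,1}$ is complete bipartite with bipartition $B_{X}=\{x\}$, $B_{Y}=\{y\}$, and $xy$ is its unique edge. Hence, to decide whether a prescribed edge of $G$ is relating, one runs Algorithm \ref{genalg} on $G$ with this $B$, at cost $O(|V|^{2})$. If one instead wants to list all relating edges of $G$, it suffices to repeat this for each of the $O(|V|^{2})$ edges, which is still polynomial (of order $O(|V|^{4})$).

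I do not expect any genuine obstacle here; the corollary is essentially a repackaging of two facts already in hand — that relating edges are precisely the $K_{1,1}$ instances of generating subgraphs, and that Algorithm \ref{genalg} correctly and polynomially decides the generating-subgraph problem. The only point worth stating explicitly in the write-up is that a single edge of a chordal graph really does constitute an admissible induced complete bipartite subgraph, so that the special case is subsumed by the general algorithm.
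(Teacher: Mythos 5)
Your proposal is correct and matches the paper's (implicit) argument: the corollary is stated as an immediate consequence of the correctness and $O(|V|^2)$-per-stage complexity of Algorithm \ref{genalg}, with the relating-edge case being exactly the $K_{1,1}$ instance of the generating-subgraph problem. Your explicit remark that a single edge is an admissible induced complete bipartite input is a reasonable clarification but does not change the route.
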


\section{Conclusions and future work}

In \cite{LevTan2016} the following four problems have been defined.

\begin{itemize}
\item $\mathbf{WC}$\textbf{ problem}:\newline\textit{Input}: A graph
$G$.\newline\textit{Question}: Is $G$ well-covered?

\item $\mathbf{WCW}$\textbf{ problem}:\newline\textit{Input}: A graph
$G$.\newline\textit{Output}: The vector space $WCW(G)$.

\item $\mathbf{GS}$ \textbf{problem}:\newline\textit{Input}: A graph $G$, and
an induced complete bipartite subgraph $B$ of $G$.\newline\textit{Question}:
Is $B$ generating?

\item $\mathbf{RE}$\textbf{ problem}:\newline\textit{Input}: A graph $G$, and
an edge $xy\in E\left(  G\right)  $.\newline\textit{Question}: Is $xy$
relating?\newline
\end{itemize}

It concluded in a table presenting complexity results on the above four
problems for various graphs. The findings of the current paper may be
considered as an extra line for this table. Specifically, every entry of this
line claims that the corresponding problem is polynomial for chordal graphs.
It was proved that the $\mathbf{GS}$ problem and the $\mathbf{RE}$ problem are
\textbf{NPC}\ even for bipartite graphs \cite{LevTan2016}. On the other hand, for this family of graphs,
it is known that the $\mathbf{WC}$\textbf{ }problem is polynomial
\cite{ravindra:well-covered}, while the complexity status of the
$\mathbf{WCW}$\textbf{ }problem is still open.

Since both
chordal graphs and bipartite graphs are perfect, it seems natural to
investigate perfect graphs with polynomially solvable $\mathbf{WC}$ and/or
$\mathbf{WCW}$ problems. Some of such subclasses of graphs are known. For instance, those with bounded
clique size and those with no induced $C_{4}$ \cite{dz:wcge}.

\newpage
\textbf {References}

\bibliographystyle{elsarticle-num}


\end{document}